\newcommand{\FDP}{\operatorname{FDP}}
\newcommand{\FDR}{\operatorname{FDR}}
\newcommand{\pval}{p\operatorname{-value}}
\newcommand{\pvals}{p\operatorname{-values}}
\newcommand{\Ho}{\mathcal{H}_0}
\newcommand{\Vopt}{V_{\operatorname{opt}}}
\newcommand{\bX}{\mathbf{X}}
\newcommand{\bXb}{\mathbf{X}^{(b)}}
\newcommand{\bY}{\mathbf{Y}}
\newcommand{\Unif}{\operatorname{Unif}}
\title{Deploying the Conditional Randomization Test\\ in High
  Multiplicity Problems} \author{Shuangning Li \thanks{Department of
    Statistics, Stanford University, USA.} \and Emmanuel J.~Cand\`es
  \thanks{Departments of Statistics and of Mathematics, Stanford
    University, USA.}  }
\begin{document}

\maketitle

\begin{abstract}
  This paper introduces the sequential CRT, which is a variable selection procedure that combines the conditional randomization test (CRT) and Selective SeqStep+. Valid $\pvals$ are constructed via the flexible CRT, which are then
  ordered and passed through the selective SeqStep+ filter to produce
  a list of discoveries. We develop theory guaranteeing control on the
  false discovery rate (FDR) even though the $\pvals$ are not
  independent.  We show in simulations that our novel procedure
  indeed controls the FDR and are competitive with---and sometimes
  outperform---state-of-the-art alternatives in terms of
  power. Finally, we apply our methodology to a breast cancer dataset
  with the goal of identifying biomarkers associated with cancer
  stage.
\end{abstract}

\section{Introduction}
To quote from \citet{benjamini2014discussion},
\begin{quote}
Significance testing is an effort to address the selection of an
interesting finding regarding a single parameter from the background
noise. Modern science faces the problem of selection of promising
findings from the noisy estimates of many.
\end{quote}
This paper is about the latter. In contemporary studies, geneticists
may have measured hundreds of thousands of genetic variants and wish
to know which of these influence a trait \citep{sesia2019gene,
  sesia2020multi}. Scientists may be interested in discovering which
demographic and clinical variables influence the susceptibility to
Parkinson’s disease \citep{gao2018model}. Economists study which
variables from individual employment and wage histories affect future
professional careers \citep{klose2020pipeline}. In all these examples
and countless others, we have hundreds or even thousands of
explanatory variables and are interested in determining which of these
influence a response of interest. The problem is to select
associations which are replicable, that is, without having too many
false positives.

Formally, let $Y \in \mathbb{R}$ be the response we wish to study, and
$X = (X_1, X_2, \ldots, X_p) \in \mathbb{R}^p$ be the vector of
explanatory variables. We call variable $j$ a null variable if $X_j$
is conditionally independent of $Y$ given the other $X$'s. This says
that the $j$-th variable does not provide information about the
response beyond what is already provided by all the other variables
(roughly, if it is not in the Markov blanket of $Y$).  Expressed
differently, a variable is null if and only if the hypothesis
\begin{equation}
\label{eqn:indep}
\mathcal{H}_j: X_j \independent Y | X_{-j}, 
\end{equation}
is true. (Throughout, $X_{-j}$ is a shorthand for all $p$ variables
except the $j$th.) Likewise, a variable $j$ is nonnull if
$\mathcal{H}_j$ is false.  Let
$\mathcal{H}_0 \subset \cb{1, \dots, p}$ be the subset of
nulls. Suppose now we have $n$ independent samples assembled in a data
matrix $\mathbf{X} \in \mathbb{R}^{n \times p}$ and a response vector
$\mathbf{Y} \in \mathbb{R}^n$. The goal is to identify the nonnull
variables with some form of type-I error control. Specifically, we
consider in this paper the false discovery rate (FDR)
\citep{benjamini1995controlling}, namely, the expected fraction of
false positives defined as
\begin{equation}
\FDR = \EE{\frac{|\hat{\mathcal{S}} \cap \mathcal{H}_0|
  }{|\hat{\mathcal{S}}| \vee 1  }},\footnote{Here and below, $a \vee b = \max(a,b)$ and
  $a\wedge b = \min(a,b)$}.
  \end{equation}
where $\hat{\mathcal{S}}$ is the selected set of variables.

\subsection{The conditional randomization test}
Naturally, in order to identify the nonnull variables, one could test
the hypotheses $\mathcal{H}_j$ in \eqref{eqn:indep}.
\citet{candes2018panning} proposed to achieve this via the conditional
randomization test (CRT). To run the CRT, we resample $\bX_j$---the
$j$th column of the matrix $\bX$---conditional on the other variables,
calculate the value of a test statistic, and compare it to the test
statistic computed on the true $\bX_j$. When the statistic computed on
the true $\bX_j$ has a high rank when compared with those obtained
from imputed values, this is evidence against the null. Details of the
CRT are given in Algorithm~\ref{alg:crt}. There, the output p-value is
valid in the sense that under the null, it is stochastically larger
than a uniform variable.
\begin{algorithm}[h]
\caption{Conditional Randomization Test (CRT)}
\label{alg:crt}
\begin{algorithmic}
\REQUIRE Data $(\bX,\bY)$, test statistic $T(\cdot)$, number of randomizations $B$
\FOR{$b \in \cb{1, \dots, B}$}
	\STATE Sample $\bX_j^{(b)}$ from the distribution of $\bX_j | \bX_{-j}$, independently of $\bX_j$ and $\bY$.
\ENDFOR
\ENSURE  The $\pval$
\[ p_j=\frac{1}{B+1}
	\p{1 + \sum_{b = 1}^B \one{\cb{T(\bX_j, \bX_{-j}, \bY) \leq T(\bX_j^{(b)}, \bX_{-j}, \bY)}}}.  \footnotemark
	\] 
\end{algorithmic}
\end{algorithm}
\footnotetext{If some values of the test statistics are the same, we break ties randomly. }

Informally, under the null hypothesis that the variable $X_j$ is
independent of $Y$ conditional on $X_{-j}$, each one of the new
samples $\bXb_j$ has the same distribution as $\bX_j$, and they are
all independent conditionally on $\bY$ and $\bX_{-j}$. As a
consequence, each $T(\bXb_j, \bX_{-j}, \bY)$ has the same distribution
as $T(\bX_j, \bX_{-j}, \bY)$. Thus the rank of
$T(\bX_j, \bX_{-j}, \bY)$ among $\cb{T(\bXb_j, \bX_{-j}, \bY)}$ will
be uniform in $\cb{1, \dots, B}$ assuming we break ties at
random. Formally, we have:
\begin{theo}[\citet{candes2018panning}]
 If $X_j \independent Y | X_{-j}$, then the $\pvals$ from Algorithm
 \ref{alg:crt} satisfy  $\PP{p_j \leq \alpha} \leq \alpha$, for any
 $\alpha \in [0,1]$. This holds regardless of the test statistic
 $T(\cdot)$. 
\end{theo}
The validity of the procedure does not rely on any assumptions on the
distribution of $Y|X$, parametric or not. Yet it requires knowledge of
the distribution of the covariates $X$. This is known as the Model-X
framework, and is an appropriate assumption in many important
applications, including genetic and economics studies, where either
knowledge about the exact covariate distribution or a large amount of
unsupervised data of covariates is available \citep{cong2013multiplex,
  haldane1931inbreeding, peters2016comprehensive, saltelli2008global,
  tang2006reconstructing}. Rapid progress has been made on
methodological advances in this framework \citep{candes2018panning,
  tansey2018holdout, sesia2019gene, bates2020metropolized,
  liu2020fast, romano2020deep, ren2020derandomizing} and in applications
to genetic studies \citep{sesia2019gene, sesia2020multi,
  bates2020causal}.

\subsection{Selective SeqStep+}
We still need a selection procedure that transforms the CRT $\pvals$
into a selected set with FDR control guarantees. A natural choice of
variable selection procedure is the Benjamini-Hochberg procedure (BHq)
\citep{benjamini1995controlling}. If we were to apply BHq, we would
need to compare the $\pvals$ with critical thresholds of the form
$\alpha_i = i q/p$, where $q$ is the nominal FDR level.  When $p$ is
large, $q$ is $0.1$, and $i = 1,2, \dots$, this requires $\pvals$ on
an extremely fine scale.  The $\pvals$ defined in Algorithm
\ref{alg:crt}, however, can only take values in
$\{1/(B+1), 2/(B+2), \dots, 1\}$, thus a huge number of randomizations
in CRT is required.  This makes the combination of CRT and BHq
computationally expensive or even infeasible. This is the motivation
for this paper: can we find a
selection procedure that does not require any of the $\pvals$ to be
very small and works well with discrete $\pvals$?

\begin{algorithm}[t]
\caption{Selective SeqStep+}
\label{alg:seqstep}
\begin{algorithmic}
\REQUIRE A sequence of $\pvals$ $p_1, \dots p_p$
\STATE 
Let
\begin{equation}
\label{eqn:seqstep}
\hat{k} =\max \left\{k \in \cb{1,\dots, p}: \frac{1+\#\left\{j \leq k: p_{j}>c\right\}}{\#\left\{j \leq k: p_{j} \leq c\right\} \vee 1} \leq \frac{1-c}{c} \cdot q\right\}. 
\end{equation}
\ENSURE  Selected set of nonnulls $\hat{\mathcal{S}} = \cb{ j \leq \hat{k}: p_{j} \leq c}$. 
If the set in $\eqref{eqn:seqstep}$ is empty, $\hat{\mathcal{S}}$ is the empty set as well.  
\end{algorithmic}
\end{algorithm}

To this end, we consider SeqStep+, a sequential testing procedure
first introduced by \citet{barber2015controlling}. We consider a
specific version, namely, Selective SeqStep+, which takes
a sequence of $\pvals$ $p_1, \dots p_p$ as input, and outputs a
selected set $\hat{\mathcal{S}}$. The procedure starts by finding an
integer $\hat{k}$ such that among the $\pvals$
$\cb{p_1, \dots, p_{\hat{k}}}$, few are greater than a user-specified
threshold $c$. In details, $\hat{k}$ is the largest $k$ in
$\cb{1, \dots p}$ such that the ratio between
$1+\#\left\{j \leq k: p_{j}>c\right\}$ and
$\#\left\{j \leq k: p_{j} \leq c\right\} \vee 1$ is no greater than
$(1-c)q/c$. The procedure then selects all $j$'s, such that
$j \leq \hat{k}$ and $p_j \leq c$. We include details of the procedure
in Algorithm~\ref{alg:seqstep}.

To understand why the procedure works, assume that the null $\pvals$ are $\text{i.i.d.}\Unif[0,1]$. Then, the ratio of $\#\left\{\text {null } j \leq \hat{k}: p_{j} \leq c\right\}$ to $\#\left\{\text {null } j \leq \hat{k}: p_{j}>c\right\}$ is roughly $c/(1-c)$. Hence 
\begin{align*}
\FDP &= \frac{\#\left\{\text {null } j \leq \hat{k}: p_{j} \leq c\right\}}{\#\left\{j \leq \hat{k}: p_{j} \leq c\right\} \vee 1}  
\approx \frac{c}{1-c} \cdot \frac{\#\left\{\text {null } j \leq \hat{k}: p_{j}>c\right\}}{\#\left\{j \leq \hat{k}: p_{j} \leq c\right\} \vee 1} \\
&  \leq \frac{c}{1-c} \cdot \frac{\#\left\{j \leq \hat{k}: p_{j}>c\right\}}{\#\left\{j \leq \hat{k}: p_{j} \leq c\right\} \vee 1}
\leq \frac{c}{1-c} \cdot \frac{1-c}{c}q = q. 
\end{align*}
Formally, we have the following result. 
\begin{theo}[\citet{barber2015controlling}]
\label{theo:indep_FDR}
Assume that the ordering of the $\pvals$ is fixed. If all null $\pvals$ are independent with $p_j \geq \operatorname{Unif} [0,1]$, and are independent from the nonnulls, then Selective SeqStep+ controls the FDR at level $q$. 
\end{theo}

A close look at \eqref{eqn:seqstep} shows that the only information
Selective SeqStep+ uses from the $\pvals$ is whether or not
$p_j \leq c$. This means that unlike BHq, Selective SeqStep+ does not
require some of the $\pvals$ to be very small to make rejections, and
hence would require a much smaller number of randomizations $B$.
Selective SeqStep+ would, therefore, be computationally far less
intensive.

\subsection{Challenges and our contribution}
\label{subsection:introduction_challenges}

In this paper, we study variable selection procedures with the
conditional randomization test and Selective SeqStep+. There are three challenges and we address them all. 

The first challenge is in the dependency of the $\pvals$. The $\pvals$
from CRT are not independent in general, hence Theorem
\ref{theo:indep_FDR} does not apply. In response, we will develop
theory in Section \ref{section:theory} showing how we can make
SeqStep+ valid under dependence. In particular, we will show examples
of approximate FDR control when the $\pvals$ are weakly dependent or
when they are exchangeable in distribution.

\begin{figure}
\begin{subfigure}{\textwidth}
\caption{A good ordering}
\includegraphics[width = \textwidth]{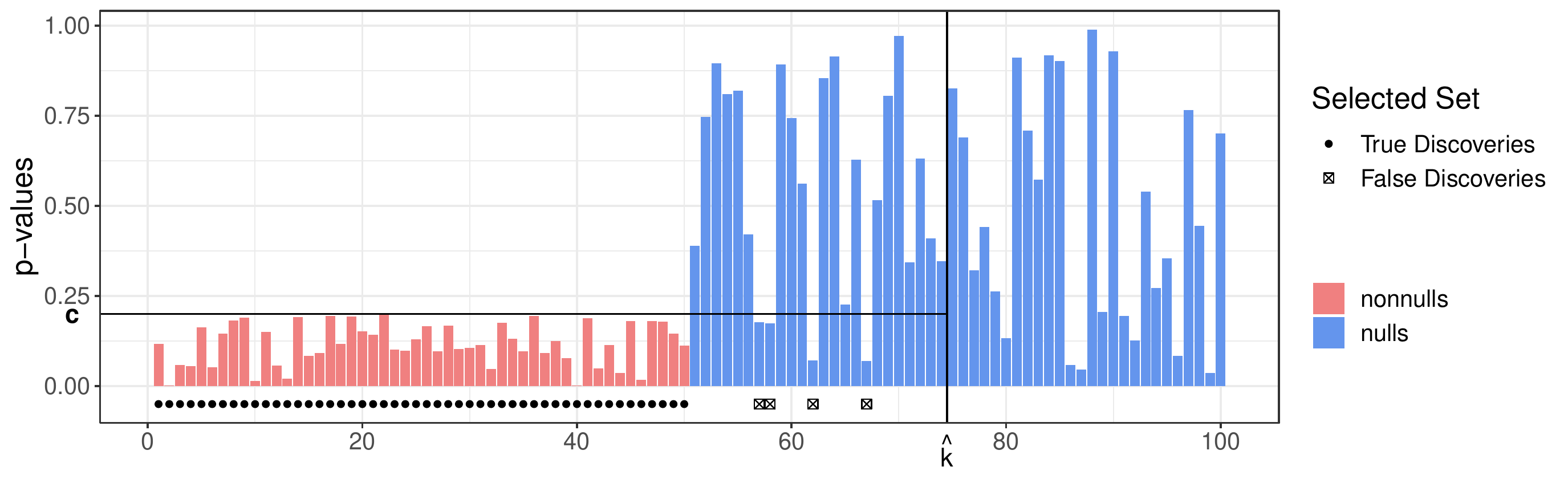}
\label{fig:good_ordering}
\end{subfigure}
\begin{subfigure}{ \textwidth}
\caption{A bad ordering}
\includegraphics[width = \textwidth]{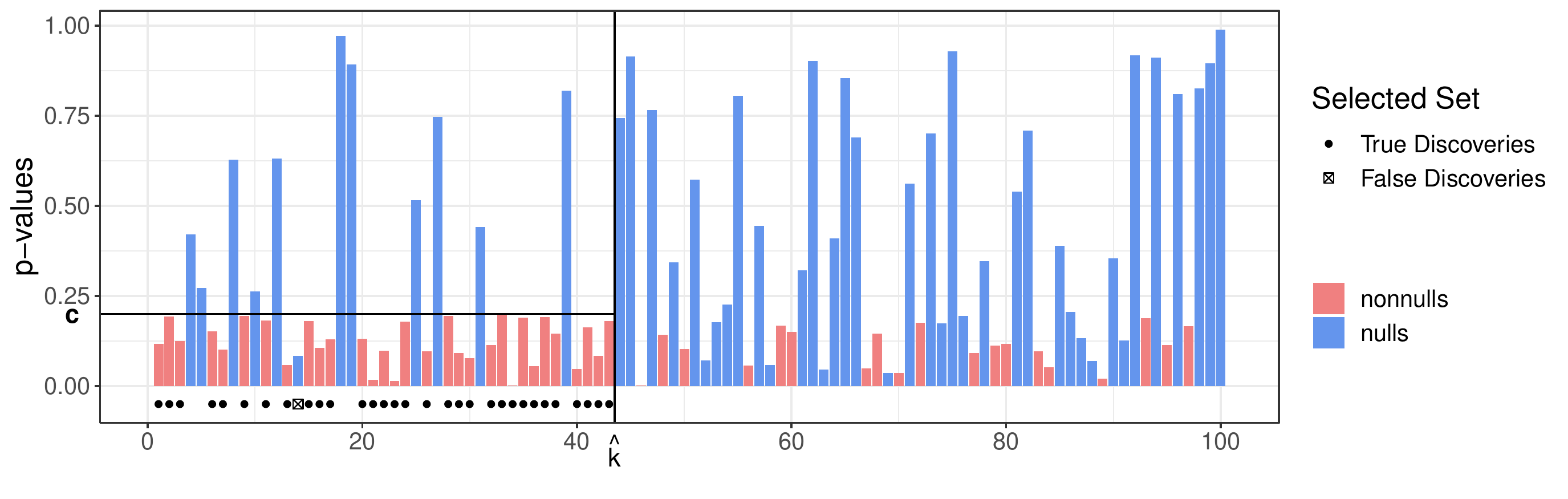}
\label{fig:bad_ordering}
\end{subfigure}
\caption{An illustration of the ordering of the $\pvals$ in Selective SeqStep+. The null $\pvals$ are sampled from $\operatorname{Unif}[0,1]$ and the nonnull $\pvals$ from $\operatorname{Unif}[0,0.2]$. We take $c = 0.2$ and $q = 0.1$. 
 The $\pvals$ are the same in the two plots, but they are ordered in two different ways. In (a), the nonnulls appear early in the sequence. In (b), the order of the $\pvals$ is random. In terms of power, Selective SeqStep+ discovers all the nonnulls in (a) but only a subset of them in (b). }
\label{fig:ordering}
\end{figure}

The second challenge concerns the ordering of the $\pvals$.  Unlike
the Benjamini-Hochberg procedure, which takes as input the $\pvals$
only, Selective SeqStep+ essentially requires two inputs: the $\pvals$
and an ordering of the $\pvals$. In other words, if we change the
order of the input $\pvals$, we could end up selecting a very
different set of variables. To illustrate this, consider the example
in Figure \ref{fig:ordering}, which fixes the $\pvals$ and compare two
different orderings. The ``good" ordering has the nonnulls appear
early in the sequence and the ``bad" ordering randomly permutes the
$\pvals$. With the good ordering, the output set contains all the
nonnulls; but with the bad ordering, only a fraction of the nonnulls is 
discovered. When the nonnull $\pvals$ appear early in the sequence,
the proportion of $\pvals$ greater than $c$ will be smaller, thus the
quantity
$\frac{1+\#\left\{j \leq k: p_{j}>c\right\}}{\#\left\{j \leq k: p_{j}
    \leq c\right\} \vee 1}$ in \eqref{eqn:seqstep} will tend to be
smaller. Therefore a larger $\hat{k}$ will be obtained and hence the
power of the procedure will be higher. In general, to make the
variable selection procedure more powerful, it is important to look
for an informative ordering that places nonnulls early in the
sequence.

Another requirement for the ordering is that it needs to be independent of the $\pvals$. FDR is in general not controlled when the $\pvals$ and the ordering are dependent. 
 As a simple example, assume a researcher obtains independent $\pvals$ and naively orders them by magnitude. Then the input sequence of $\pvals$ into Selective SeqStep+ would be an ordered sequence $p_{(1)} \leq p_{(2)} \dots \leq  p_{(p)}$. In this case, the null $\pvals$ that appear early in the sequence will tend to be smaller and hence no longer uniform. In the case of the global null (all hypotheses are null) with independent $\pvals$, we would expect to make around $c p$ false discoveries. This is because there are approximately $c p$ $\pvals$ that are smaller than or equal to $c$, and they all appear early in the sequence, hence for $k = (c + (1-c)q)p$, 
\[ \frac{1 + \#\left\{j \leq k: p_{j}>c\right\}}{\#\left\{j \leq k: p_{j} \leq c\right\} \vee 1} \approx \frac{(c + (1-c)q)p - cp}{cp} = \frac{(1-c)q}{c}.\]

In Section \ref{section:method}, we will present two methods to obtain
the ordering: the {\em split} version and the {\em symmetric
  statistic} version.  The former splits the data into two parts,
obtaining $\pvals$ from one fold, and the ordering from the
other. This makes the ordering and the $\pvals$ stochastically
independent. No data-splitting is required for the symmetric statistic
version; we obtain both the $\pvals$ and the ordering from the whole
dataset. To obtain the ordering, we compute a statistic $z_j$ for each
variable $j$, and sort the $z_j$'s. The statistic $z_j$ is obtained in
such a way that $z_j$ is marginally independent of the $\pval$ $p_j$.
In theory, this notion of independence is not sufficient for FDR
control; we however tested this method in many different empirical
settings and always controlled the $\FDR$.  In terms of power, the
symmetric statistic version is more powerful than the split
version. Thus in practice, we would recommend the symmetric statistic
version.

The third challenge is computational in nature. Recall that with the
CRT (Algorithm \ref{alg:crt}), we need to compute the test statistic
$T_j^{(b)}$ for each $j$ and each $b$. Each statistic $T_j^{(b)}$ is
obtained by sampling $\bX_j^{(b)}$ and running a machine learning
algorithm with $\bY$ as a response and $\bX_j^{(b)}, \bX_{-j}$ as
predictors. It is computationally expensive to run the machine
learning algorithm $B$ times to get a single $\pval$. In Section
\ref{section:computation}, we will present a faster way of obtaining
the test statistics and, hence, the $\pvals$.

\section{Selective SeqStep+ under dependence}
\label{section:theory}

\subsection{Almost independent $p$-values}
\label{subsection:cond_prob}
When employing SeqStep+, it is natural to ask whether the FDR is still
controlled when the $\pvals$ are ``close'' to being independent. This
section derives an bound upper bound on the FDR, which depends on the
value of
$\max_{j \in \Ho} \mathbb{P}[p_j \leq c \mid \one\cb{p_{-j} \leq
  c}]$,\footnote{For a set
  $S = \cb{j_1, \dots, j_K} \subset \cb{1, \dots, p}$, we define
  $\one\cb{p_S \leq c}$ as the Boolean vector
  $(\one\cb{p_{j_1} \leq c}, \dots, \one\cb{p_{j_K} \leq c}) $.} the
maximum of the probability that $p_j$ is at most $c$ conditional on the boolean sequence of whether other $\pvals$ are smaller than or equal to $c$. 
Under independence of the $\pvals$, it holds that
$\max_{j \in \Ho} \PP{p_j \leq c \mid \one\cb{p_{-j} \leq c}} \leq c$
since marginally, $\PP{p_j \leq c} \leq c$ for $j \in \Ho$. Our first
result states that if
$\max_{j \in \Ho} \PP{p_j \leq c \mid \one\cb{p_{-j} \leq c}}$ is
close to $c$ with high probability, then the FDR inflation cannot be
large.
\begin{theo}[Almost independent $p$-values]
\label{theo:cond_prob}
Suppose the ordering of the $\pvals$ is fixed. Set
$a_j = \PP{p_j \leq c \mid \one\cb{p_{-j} \leq c}}$ and assume the
$\pvals$ satisfy
$ \PP{ \max_{j \in \mathcal{H}_0} a_{j} \leq c + \delta} \geq 1 -
\epsilon$. Then the output from Algorithm \ref{alg:seqstep} obeys
\begin{equation}
\label{eqn:theo_cond_prob}
 \FDR \leq q \frac{c+\delta}{c}\frac{1-c}{1-c-\delta} + \epsilon. 
\end{equation}
\end{theo}

As an illustration, we describe two examples where the FDR bound can
be computed numerically. Consider data $(\bX, \bY)$, where each row of
$\bX$ is generated independently from a multivariate Gaussian
distribution with block diagonal covariance; that is, $X_j$ is only
dependent on nearby variables. Recalling that the $\pvals$ are
obtained from Algorithm \ref{alg:crt}, the first example takes the
marginal test statistic to be
$T(\bX_j, \bX_{-j}, \bY) = \abs{\Corr{\bX_j, \bY}}$, whereas in the
second example, we regress $\bY$ on $\bX_j$ and $\bX_{N(j)}$, and take
the test statistic $T(\bX_j, \bX_{-j}, \bY)$ to be the absolute value
of the fitted coefficient of $\bX_j$. Here, the elements of $N(j)$ are
the ``neighbors" of $j$, i.e. the indices in the same block as $j$.
Additional details of the simulation settings are included in Appendix
\ref{subsection:simulation_details_cond_prob}.

Before proceeding with the computation, we note that if $a_j$ were
defined conditional on additional information,
e.g.~$a_j = \PP{p_j \leq c \mid \bY, \one\cb{p_{-j} \leq c}}$, then
Theorem \ref{theo:cond_prob} would still hold. The specific block
diagonal structure of the covariance of $X$ implies that $X_i$ and
$X_j$ are independent conditionally on $Y$ if $i$ and $j$ are not in
the same block.  Thus,
$a_j = \PP{p_j \leq c \mid\bY, \one\cb{p_{-j} \leq c}} = \PP{p_j \leq
  c \mid \bY, \one\cb{p_{N(j)} \leq c}}$. For a block of size $K$, the
variable $\one\cb{p_{N(j)} \leq c}$ can take at most $2^K$ distinct
values. In practice, the conditional probability
$\PP{p_j \leq c \mid \bY, \one\cb{p_{N(j)} \leq c}}$ can therefore be
estimated using sample proportions. One can fix $\bY$, sample $\bX$
from the distribution of $X \mid Y$, compute the corresponding
$\pvals$, and compute the frequency of the event $\cb{p_j \leq c}$
conditional on the value of $\one\cb{p_{N(j)} \leq c}$. This is the
reason why the block diagonal structure of the covariance is used
here; this structure makes computations tractable since we are dealing
with $2^K$ rather than $2^{p-1}$ possible configurations.

\begin{figure}[h]
\centering
\includegraphics[width = 0.9\textwidth]{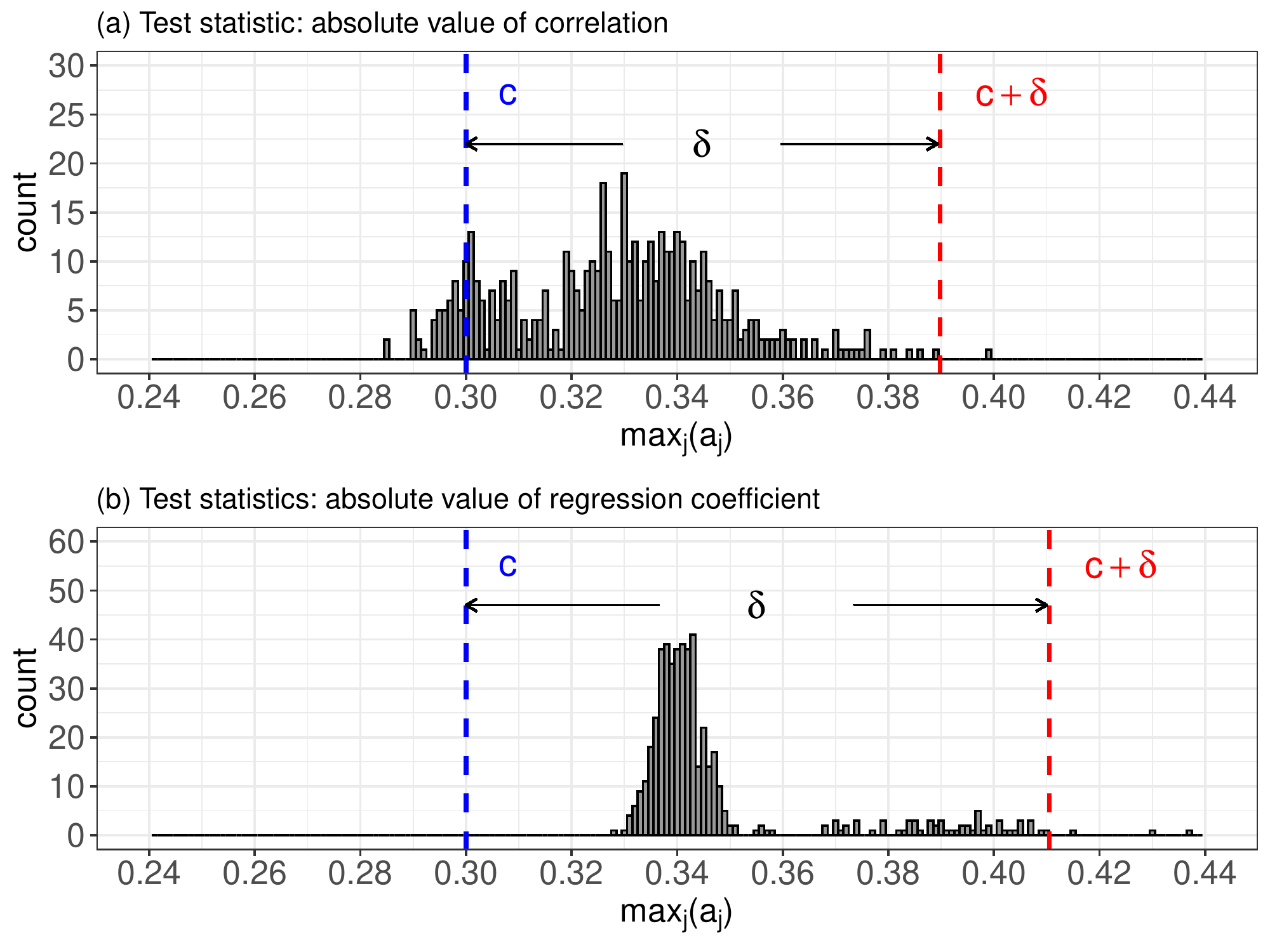}
\caption{Histogram of $\max_{j \in \mathcal{H}_0} a_{j}$ from 500
  samples. In (a), the test statistic of CRT is taken to be the
  absolute value of the correlation. In (b), the test statistic of CRT
  is taken to be the absolute value of the regression coefficient. We
  plot the threshold $c$ (blue dashed line). To apply the bound from
  Theorem \ref{theo:cond_prob}, one possible choice of $\delta$ is
  shown as the distance between the red and blue lines.}
\label{fig:hist_max_aj}
\end{figure}

In Figure \ref{fig:hist_max_aj}, we plot the histogram of
$\max_{j \in \mathcal{H}_0} a_{j} = \max_{j \in \Ho} \PP{p_j \leq c \mid \bY, \one\cb{p_{N(j)} \leq c}}$ from 500 samples and show a possible choice of $\delta$
and $\epsilon$.  Here, the FDR threshold $q$ is set to be 0.1 and $c$
is chosen to be $0.3$.  In the example where the test statistic is the
absolute value of correlation between $\bX$ and $\bY$, we can take
$\delta = 0.0893$ and $\epsilon = 0.002$. The FDR bound in
\eqref{eqn:theo_cond_prob} is thus
$q \frac{c+\delta}{c}\frac{1-c}{1-c-\delta} + \epsilon = 0.1508$. In
the other example where the test statistic is the absolute value of
the fitted regression coefficient, we can take $\delta = 0.11$ and
$\epsilon = 0.006$. The FDR bound in \eqref{eqn:theo_cond_prob} is
thus $q \frac{c+\delta}{c}\frac{1-c}{1-c-\delta} + \epsilon = 0.1682$.

\subsection{Under exchangeability}
In this section, we study whether additional structure on the $\pvals$
can be helpful in obtaining sharper FDR bounds.  To this end, consider
the assumption of exchangeability. We say that the random variables
$A_1, A_2, \dots, A_m$ are \emph{exchangeable} conditional on a random
variable $B$ if
$(A_{1}, \dots, A_{m} ) \mid B \stackrel{d}{=} (A_{\pi(1)}, \ldots,
A_{\pi(m)} ) \mid B$ for any permutation $\pi$. With this, this section makes use of the following assumption:
\begin{assu}
\label{assu:exch}
The null $\pvals$ are exchangeable conditional on the nonnull $\pvals$.
\end{assu}

To understand Assumption \ref{assu:exch}, we study examples where it
holds. Consider $\pvals$ obtained from the CRT. A sufficient set of
conditions is that the variables $X_j$'s are exchangeable and that the
test statistic $T(\cdot)$ in the CRT (Algorithm \ref{alg:crt}) is
symmetric in $\bX_{-j}$.  As a concrete example, imagine $X$ follows a
$\mathcal{N}(\mathbf{\mu}, \Sigma)$ distribution, where all entries in
$\mu$ are the same and all off-diagonal terms in $\Sigma$ are the
same. Then if $T(\bX_j, \bX_{-j}, \bY)$ is obtained by running a lasso
regression of $\bY$ on $\bX$ and taking the regression coefficient
of $j$, then the null $\pvals$ are exchangeable conditional on the
nonnulls.

Under the assumption of exchangeability, we can show that FDR
inflation will not be large. In particular, if the $\pvals$ are weakly
correlated with each other, we get a sharper upper bound.
\begin{theo}[Under exchangeability]
\label{theo:exchangeable}
Suppose the ordering of the $\pvals$ is fixed and that the nulls are
marginally stochastically larger than uniform. Under Assumption
\ref{assu:exch}, Algorithm \ref{alg:seqstep} gives
\begin{equation}
\label{eqn:exch_bound1}
\FDR \leq q + c(1-q).
\end{equation}
If, in addition, the $\pvals$ satisfy $\Corr{\one\cb{p_i \leq c}, \one\cb{p_j \leq c}} \leq \rho$ for any nulls $i \neq j$, then
\begin{equation}
\label{eqn:exch_bound2}
\FDR \leq q + \varepsilon(c, q, \rho),
\end{equation}
where
$$ \varepsilon(c, q, \rho) = \p{\frac{\delta}{1 + \beta \delta}
  \sqb{\frac{c}{1-c} - \frac{c - c\delta}{1 - (c-c\delta)}q} } \wedge
c(1-q), \quad \beta = \frac{c + (1-c)q}{(1-c)(1-q)}, \quad \delta =
\rho \frac{c(1-q) + q}{c(1-q)}.$$ The two bounds \eqref{eqn:exch_bound1} and
\eqref{eqn:exch_bound2} are sharp asymptotically. For illustration,
the bound \eqref{eqn:exch_bound2} is plotted in
Figure~\ref{fig:exch_bounds}.
\end{theo}

\begin{proof}
We will show the asymptotic sharpness of \eqref{eqn:exch_bound1} here. Specifically, we will show an example where the $\FDR$ converges to $q + c(1-q)$ as $p \to \infty$. We include a proof of the two upper bounds and the asymptotic sharpness of \eqref{eqn:exch_bound2} in Appendix \ref{subsection:proof_exch}.

Assume we are under the global null, i.e., all variables are
nulls. Set $m_0 = 1 + \lceil \frac{c p}{q + c(1-q)} \rceil$ and
consider null $\pvals$ sampled as follows:
\begin{enumerate}
\item With probability $c p/m_0$, pick $m_0$ indices uniformly at
  random from $\cb{1, \dots, p}$, and sample the corresponding
  $\pvals$ as $\text{i.i.d.}\Unif[0,c]$; sample the other $\pvals$
  independently from $\Unif[c,1]$.

\item With probability $1 - c p/m_0$, sample all $\pvals$ as
  $\text{i.i.d.}\Unif[c,1]$.
\end{enumerate}
One can easily verify that each $\pval$ marginally follows a
$\Unif[0,1]$ distribution. On the first event, we always reject all
the variables because
\[ \frac{1+\#\left\{j \leq p: p_{j}>c\right\}}{\#\left\{j \leq p:
      p_{j} \leq c\right\} \vee 1} = \frac{1 + p - m_0}{m_0} \leq
  \frac{1 + p - m_0}{m_0 - 1} \leq \frac{1-c}{c} \cdot q. \] Thus
$\FDP = 1$. On the second event, we reject none of the variables, thus
$\FDP = 0$. Combining the two cases, we get
$\FDR = cp/m_0 \to q + c(1-q)$ as $p \to \infty$.
\end{proof}

\begin{figure}
\centering
\includegraphics[width = 0.7\textwidth]{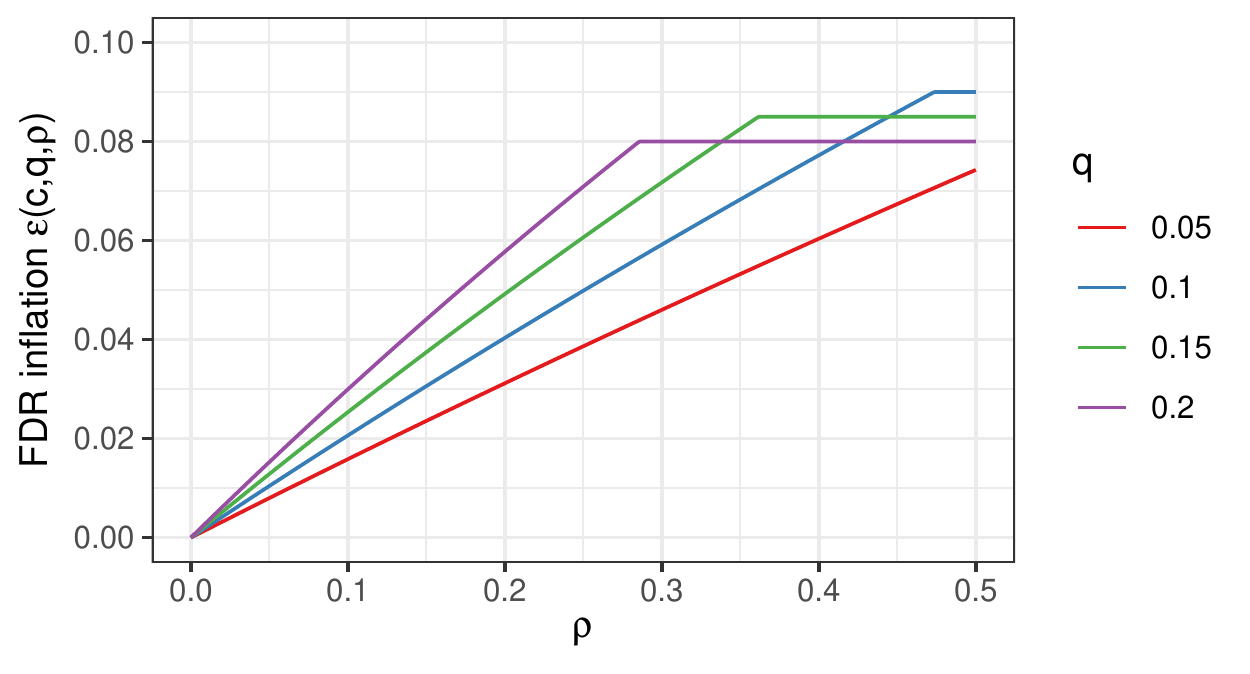}
\caption{The bound on FDR inflation $\varepsilon(c, q, \rho)$. Here we set $c= 0.1$, vary $\rho$ from 0 to 0.5, and vary the FDR threshold $q$ from $0.05$ to $0.2$. }
\label{fig:exch_bounds}
\end{figure}

Compared to the FDR bound in Theorem \ref{theo:cond_prob}, Theorem
\ref{theo:exchangeable} is neither weaker nor stronger. Theorem
\ref{theo:exchangeable} holds when the null $\pvals$ are exchangeable,
whereas Theorem \ref{theo:cond_prob} holds when the $\pvals$ are close
to being independent. When the $\pvals$ are exchangeable and highly
correlated, for example in the most extreme case where all the
$\pvals$ are the same, then \eqref{eqn:exch_bound1} in Theorem
\ref{theo:exchangeable} gives that $\FDR \leq q + c(1 - q)$, whereas
\eqref{eqn:theo_cond_prob} in Theorem \ref{theo:cond_prob} would not
be informative at all. In a different setting where the $\pvals$ are
independent but follow different distributions, Theorem
\ref{theo:cond_prob} can be used to show that $\FDR \leq q$, whereas
Theorem \ref{theo:exchangeable} cannot be applied.

\subsection{Beyond exchangeability or almost independence of the $p$-values}

In general, when the $\pvals$ have an arbitrary dependence structure,
we can bound the FDR with a logarithmic inflation; the sharpness of
the bound below is an open question.
\begin{theo}[Arbitrary dependence]
  \label{theo:no_assu}
  Suppose the ordering of the $\pvals$ is fixed and that the nulls are
  marginally stochastically larger than uniform. If $(1-c)q < c$, then Algorithm
  \ref{alg:seqstep} yields
\begin{equation}
\label{eqn:no_assu_bound}
 \FDR \leq (q + c(1-q)) \sum_{j \in \mathcal{H}_0} \frac{1}{j+1} \leq (q + c(1-q))\log p. 
\end{equation}
\end{theo}
When we have a good ordering of the $\pvals$, i.e., when the null
$\pvals$ tend to have larger indices, then the right-hand side
$(q + c(1-q)) \sum_{j \in \mathcal{H}_0} \frac{1}{j+1}$ is
smaller. Comparing to the case with exchangeability, we observe a
potential logarithmic inflation on the FDR. A similar phenomenon has
been observed for the BHq procedure, where an arbitrary dependence
among the $\pvals$ also brings a possible logarithmic inflation
\citep{benjamini2001control}.

\section{Methods to order hypotheses} 
\label{section:method}
When performing variable selection with CRT and Selective SeqStep+, it
is important to have a good ordering of the hypotheses/CRT $\pvals$. A naive way
of obtaining the ordering is as follows: apply any machine learning
algorithm to $(\bX,\bY)$, compute a statistic $z_j$ providing evidence
against the hypothesis that $j$ is null, sort the CRT $\pvals$ by
decreasing order of the $z_j$'s, and apply Selective SeqStep+.  As
argued in Section \ref{subsection:introduction_challenges}, despite
the intuitive structure of this procedure, the dependence between the
$\pvals$ and the ordering will, in general, imply a loss of FDR
control.

\subsection{Splitting}
\label{subsection:exact}

\floatname{algorithm}{Procedure}
%%%%%%%%%%%%%%%%%%%%%%%%%%%%%%%%%%%%%%%%%%%%%%%%%%%%%%%%%%%%%%%%%%%%%%%%%%%%%%%%%%%%%%%%%%%%%%%%%%%%%%%%
\begin{algorithm}[h]
\caption{The Sequential CRT (Split version)}
\label{alg:exact}
\begin{algorithmic}
\REQUIRE Data $\mathcal{D} = (\bX, \bY)$, number of randomizations $B$, test statistic $T(\cdot)$, score function $Z$, $\FDR$ threshold $q$, SeqStep threshold $c$. 
\end{algorithmic}

\begin{enumerate}
\item 
\begin{algorithmic}
\STATE Split the data into two folds $\mathcal{D}_{\operatorname{pval}} = (\bX^{\operatorname{pval}}, \bY^{\operatorname{pval}})$ and $\mathcal{D}_{\operatorname{ordering}} = (\bX^{\operatorname{ordering}}, \bY^{\operatorname{ordering}})$. 
\end{algorithmic}

\item 
\begin{algorithmic}
\STATE Obtain $\pvals$ $p_1 \dots, p_p$ on $\mathcal{D}_{\operatorname{pval}}$ from CRT (Algorithm \ref{alg:crt}). 
\end{algorithmic}

\item 
\begin{algorithmic}
\STATE Compute statistics $z_j = Z\p{\bX_j^{\operatorname{ordering}},\bX_{-j}^{\operatorname{ordering}},  \bY^{\operatorname{ordering}}}$ on $\mathcal{D}_{\operatorname{ordering}}$ for each $j \in \cb{1, \dots p}$, and obtain ordering $\pi$ by sorting the statistics: $z_{\pi(1)} \geq z_{\pi(2)} \dots \geq z_{\pi(p)}$. 
\end{algorithmic}

\item 
\begin{algorithmic}
\STATE 
Apply Selective SeqStep+ (Algorithm \ref{alg:seqstep}) to $p_{\pi(1)}, p_{\pi(2)}, \dots, p_{\pi(p)}$. 
\end{algorithmic}
\end{enumerate}

\begin{algorithmic}
\ENSURE Discoveries from Selective SeqStep+. 
\end{algorithmic}

\end{algorithm}
%%%%%%%%%%%%%%%%%%%%%%%%%%%%%%%%%%%%%%%%%%%%%%%%%%%%%%%%%%%%%%%%%%%%%%%%%%%%%%%%%%%%%%%%%%%%%%%%%%%%%%%%
The split version of the sequential CRT (Procedure \ref{alg:exact}) makes the $\pvals$ and
ordering independent through data splitting: the data is split into
two folds; the $\pvals$ are obtained from the CRT on the first fold;
and the ordering is obtained on the second fold. Independence ensures
that Theorem~\ref{theo:cond_prob} holds for this procedure.  The
downside is that this suffers from a power loss as is the case for
many other data splitting procedures. This motivates us to look for
procedures that use the full data to obtain both the $\pvals$ and the
ordering.

\subsection{Symmetric statistics}
\label{subsection:inexact}

As seen in Section \ref{subsection:introduction_challenges}, the correlation between the null $\pval$ $p_j$ and the statistic $z_j$, which is sorted to obtain the ordering, largely accounts for the FDR inflation. It is thus natural to seek procedures that make $p_j$ and $z_j$ independent for nulls. To this end, recall that the $\pval$ $p_j$ is defined as 
\[p_j=\frac{1}{B+1} \p{1 + \sum_{b = 1}^B \one \cb{T(\bX_j, \bX_{-j},
      \bY) \geq T(\bX_j^{(b)}, \bX_{-j}, \bY)}}. \] We propose a
method with $p_j$ as above and each $z_j$ constructed as follows:
consider a function $Z$ that is symmetric in its first $B+1$
arguments,\footnote{We say a function $h(x_1, \dots, x_n)$ is
  symmetric in its first $m$ arguments if for any permutation $\pi$ of
  $\cb{1,2,\dots, m}$,
  $h(x_1, x_2, \dots, x_m, x_{m+1}, \dots x_n) = h(x_{\pi(1)},
  x_{\pi(2)}, \dots, x_{\pi(m)}, x_{m+1}, \dots x_n)$.} and define 
\[z_j = Z\p{\bX_j, \bX_j^{(1)}, \dots, \bX_j^{(B)}, \bX_{-j}, \bY}.\]
This definition leads to Procedure \ref{alg:inexact}.

%%%%%%%%%%%%%%%%%%%%%%%%%%%%%%%%%%%%%%%%%%%%%%%%%%%%%%%%%%%%%%%%%%%%%%%%%%%%%%%%%%%%%%%%%%%%%%%%%%%%%%%%
\begin{algorithm}[h]
\caption{The Sequential CRT (Symmetric statistic version)}
\label{alg:inexact}
\begin{algorithmic}
\REQUIRE Data $\mathcal{D} = (\bX, \bY)$, number of randomizations $B$, test statistic $T(\cdot)$, score function $Z(\cdot)$ (symmetric in its first $B+1$ arguments), $\FDR$ threshold $q$, SeqStep threshold $c$. 
\end{algorithmic}

\begin{enumerate}
\item 
\begin{algorithmic}
\STATE Obtain $\pvals$ $p_1 \dots, p_p$ on $\mathcal{D}$ from CRT (Algorithm \ref{alg:crt}). 
\end{algorithmic}

\item 
\begin{algorithmic}
  \STATE Compute statistics:
  $z_j = Z\p{\bX_j, \bX_j^{(1)}, \dots, \bX_j^{(B)}, \bX_{-j}, \bY}$
  for each $j \in \cb{1, \dots p}$, and obtain an ordering $\pi$ by
  sorting the statistics:
  $z_{\pi(1)} \geq z_{\pi(2)} \dots \geq z_{\pi(p)}$.
\end{algorithmic}

\item 
\begin{algorithmic}
\STATE Apply Selective SeqStep+ (Algorithm \ref{alg:seqstep}) to $p_{\pi(1)}, p_{\pi(2)}, \dots, p_{\pi(p)}$. 
\end{algorithmic}
\end{enumerate}

\begin{algorithmic}
\ENSURE Discoveries from Selective SeqStep+. 
\end{algorithmic}

\end{algorithm}
%%%%%%%%%%%%%%%%%%%%%%%%%%%%%%%%%%%%%%%%%%%%%%%%%%%%%%%%%%%%%%%%%%%%%%%%%%%%%%%%%%%%%%%%%%%%%%%%%%%%%%%%

Intuitively, the $\pval$ $p_j$ is capturing the relative rank of
$\bX_j$ among $\cb{\bX_j^{(1)}, \dots, \bX_j^{(B)}}$, yet, $z_j$ is
symmetric in $\cb{\bX_j, \bX_j^{(1)}, \dots, \bX_j^{(B)}}$. The
symmetry allows us to permute elements in
$\Big\{\bX_j, \bX_j^{(1)}, \dots,$ $\bX_j^{(B)}\Big\}$ while keeping
$z_j$ fixed. This means that $z_j$ is not providing information
regarding the relative rank of $\bX_j$ among
$\cb{\bX_j^{(1)}, \dots, \bX_j^{(B)}}$ and is hence independent of
$p_j$. Put formally:
\begin{prop}
\label{theo:pj_zj}
The $\pval$ $p_j$ and the statistic $z_j$ defined in Procedure \ref{alg:inexact} obey
$p_j \independent z_j$ for any null $j$. In addition, $p_j \independent z_j |\bX_{-j}, \bY$,  for any null $j$. 
\end{prop}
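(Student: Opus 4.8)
\medskip
\noindent\textbf{Proof proposal.} The plan is to prove the conditional statement $p_j\independent z_j\mid\bX_{-j},\bY$ first, and then to upgrade it to the unconditional one. Fix a null index $j$ and write $\bX_j^{(0)}:=\bX_j$. The first step is to condition on $(\bX_{-j},\bY)$ and record the only place the null hypothesis is used: by the construction in Algorithm~\ref{alg:crt}, the copies $\bX_j^{(1)},\dots,\bX_j^{(B)}$ are i.i.d.\ from $P_{\bX_j\mid\bX_{-j}}$ and jointly independent of $(\bX_j,\bY)$; and since $j$ is null, $\bX_j\independent\bY\mid\bX_{-j}$, so $\bX_j$ too has conditional law $P_{\bX_j\mid\bX_{-j}}$ given $(\bX_{-j},\bY)$. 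Hence the whole tuple $(\bX_j^{(0)},\bX_j^{(1)},\dots,\bX_j^{(B)})$ is i.i.d.---in particular exchangeable---conditionally on $(\bX_{-j},\bY)$.

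Next I would disentangle $p_j$ and $z_j$ by showing that $p_j$ sees only the \emph{rank} of the $0$-th coordinate while $z_j$ sees only the \emph{unordered collection}. Introduce i.i.d.\ $\Unif[0,1]$ tie-breakers $U_0,\dots,U_B$ independent of everything, and compare the pairs $\big(T(\bX_j^{(i)},\bX_{-j},\bY),U_i\big)$ in lexicographic order $\preceq$; conditionally on $(\bX_{-j},\bY)$ these $B+1$ pairs are i.i.d.\ and a.s.\ pairwise distinct. With the random tie-breaking of Algorithm~\ref{alg:crt}, $p_j$ is a fixed strictly monotone function of $R_0:=\#\cb{i:\ \text{pair } i\preceq\text{pair }0}$, the rank of the $0$-th element. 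On the other hand, since $Z$ is symmetric in its first $B+1$ arguments and does not involve the $U_i$, the statistic $z_j$ is a function of the unordered collection $\cb{\bX_j^{(0)},\dots,\bX_j^{(B)}}$ together with $(\bX_{-j},\bY)$; in particular $z_j$ is measurable with respect to the order statistics of the augmented sample (the vectors listed together with their tie-breakers in sorted order), whereas $R_0$ is a function of the rank vector.

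The core of the argument is then the classical fact underlying rank and permutation tests: for an exchangeable, a.s.\ distinct sequence the rank vector is uniform on $S_{B+1}$ and independent of the order statistics. Applied conditionally on $(\bX_{-j},\bY)$, this gives $R_0\sim\Unif\cb{1,\dots,B+1}$ independent of the order statistics, hence $p_j\independent z_j$ conditionally on $(\bX_{-j},\bY)$, which is the second assertion. It also shows the conditional law of $p_j$ given $(\bX_{-j},\bY)$ is uniform on $\cb{1/(B+1),\dots,1}$, and in particular does \emph{not} depend on $(\bX_{-j},\bY)$, so $p_j\independent(\bX_{-j},\bY)$. De-conditioning is then immediate: for bounded measurable $f,g$,
\[ \EE{f(p_j)g(z_j)}=\EE{\EE{f(p_j)\mid\bX_{-j},\bY}\,\EE{g(z_j)\mid\bX_{-j},\bY}}=\EE{f(p_j)}\,\EE{\EE{g(z_j)\mid\bX_{-j},\bY}}=\EE{f(p_j)}\,\EE{g(z_j)}, \]
using the conditional independence and then $p_j\independent(\bX_{-j},\bY)$; hence $p_j\independent z_j$.

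I expect the substance of the proof to lie not in algebra but in two bookkeeping points. The first is the careful handling of ``ties broken at random'' in the definition of $p_j$: the auxiliary tie-breakers $U_i$ make $R_0$ well defined and exactly uniform, and one must check that $z_j$ is still a function of the order statistics of the \emph{augmented} sample, which holds precisely because $z_j$ ignores the $U_i$. The second is the passage from conditional to unconditional independence, which is not automatic from conditional independence alone and genuinely relies on the fact, established above, that the conditional distribution of $p_j$ is free of the conditioning variables.
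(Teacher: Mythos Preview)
Your argument is correct and rests on the same structural fact as the paper's: under the null, conditionally on $(\bX_{-j},\bY)$ the $B{+}1$ copies $\bX_j^{(0)},\dots,\bX_j^{(B)}$ are i.i.d., and $z_j$ depends on them only through a symmetric function. The paper packages this slightly differently: rather than invoking the rank/order-statistics decomposition, it observes that because $z_j$ is invariant under permutation of the copies, the exchangeability of $(\bX_j^{(0)},\dots,\bX_j^{(B)})$ persists after additionally conditioning on $z_j$; hence the $T_j^{(b)}$ remain exchangeable given $(\bX_{-j},\bY,z_j)$ and $p_j$ is uniform on $\{1/(B{+}1),\dots,1\}$ given $(\bX_{-j},\bY,z_j)$, which immediately yields both the conditional and (since the law does not depend on the conditioning) the unconditional independence. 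Your route via independence of ranks and order statistics is the textbook translation of the same fact; it buys you an explicit, clean treatment of random tie-breaking and a fully written-out de-conditioning step, while the paper's ``condition on $z_j$'' shortcut is terser but leaves the unconditional claim implicit. Either framing is fine.
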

This result is a special case of Proposition~\ref{prop:one_shot} that will be presented later. 

Note that Proposition \ref{theo:pj_zj} is not sufficient to guarantee
FDR control. Even though $p_j$ is independent of $z_j$, $p_j$ could,
in principle, still have a complicated relationship with
$z_{-j}$. This makes the $\pvals$ not entirely independent of the
ordering. This however does not appear to lead to FDR inflation in
practice. We indeed observe FDR control in various simulation studies
in Section \ref{section:simulation}.

The statistic $z_j$ can be computed using complicated machine learning
methods. For example, one can run a gradient boosting algorithm with regression trees as base learners, $\bY$ as a response, and
$\bX_j, \bX_j^{(1)}, \dots, \bX_j^{(B)}, \bX_{-j}$ as predictors,
obtain feature importance statistics of
$\bX_j, \bX_j^{(1)}, \dots, \bX_j^{(B)}$, and take $z_j$ to be the
maximum of the statistics. One can easily verify that with this
specific construction, $z_j$ is symmetric in
$\p{\bX_j, \bX_j^{(1)}, \dots, \bX_j^{(B)}}$.

\section{Towards faster computation: one-shot CRT}
\label{section:computation}
In the original CRT (Algorithm \ref{alg:crt}), to compute each $\pval$
$p_j$ one runs a machine learning algorithm $B$ times to obtain the
test statistics $T_j^{(b)}$ for $b \in \cb{1, \dots, B}$. This quickly
gets computationally expensive when the machine learning algorithm is
run on a large dataset. To save computation time, another way of
computing the statistics is to run the machine learning algorithm
once, with $\bY$ as a response and
$\bX_j, \bX_j^{(1)}, \dots, \bX_j^{(B)}, \bX_{-j}$ as the predictors.
Formally, we consider a procedure $T$ that takes
$\p{\bX_j^{(0)}, \bX_j^{(1)}, \dots, \bX_j^{(B)}, \bX_{-j}, \bY}$ as
input, and outputs importance statistics $T_j^{(b)}$ for each
$b \in \cb{1, \dots, B}$, i.e.,
\begin{equation}
\label{eqn:one_shot_procedure}
 \p{T_j^{(0)}, \dots, T_j^{(B)}} = T\p{\bX_j^{(0)}, \bX_j^{(1)}, \dots, \bX_j^{(B)}, \bX_{-j}, \bY}.  
\end{equation}
We restrict attention to procedures obeying the following symmetry
property: for all permutations $\pi$,
\begin{equation}
\label{eqn:one_shot_symmetric}
 T\p{ \sqb{\bX_j^{(0)}, \bX_j^{(1)}, \dots, \bX_j^{(B)} }_{\operatorname{perm}(\pi)} , \bX_{-j}, \bY  } 
 =  \sqb{T\p{\bX_j^{(0)}, \bX_j^{(1)}, \dots, \bX_j^{(B)}  , \bX_{-j}, \bY  } }_{\operatorname{perm}(\pi)}. 
\end{equation}
This is saying that if we permute the input $\bX_j^{(b)}$, this has the effect of permuting the output statistics. 
With these statistics, we obtain $\pvals$ via 
\begin{equation}
\label{eqn:one_shot_pvalue}
p_j=\frac{1}{B+1}
	\p{1 + \sum_{b = 1}^B \one{\cb{T_j^{(0)} \leq T_j^{(b)} }}}.  \footnote{We break ties randomly. }
\end{equation}
 We call this procedure \textit{one-shot CRT}. 

 As a concrete example, consider a case where the lasso is used to
 compute the test statistic $T_j^{(b)}$. To compute each $\pval$
 $p_j$, the original CRT runs the lasso by regressing $\bY$ on
 $\bX_j^{(b)}, \bX_{-j}$ for each $b \in \cb{0, 1, \dots, B}$, and
 takes $T_j^{(b)}$ to be the absolute value of the fitted coefficient
 for $\bX_j^{(b)}$. In total, we run $B+1$ regressions. In contrast, the one-shot
 CRT runs the lasso only once by regressing $\bY$ on
 $\bX_j^{(0)}, \bX_j^{(1)}, \dots, \bX_j^{(b)} , \bX_{-j}$ and takes
 $T_j^{(b)}$ to be the corresponding $|\hat{\beta}|$ for
 $\bX_j^{(b)}$ (this obeys \eqref{eqn:one_shot_symmetric}).

The symmetry in \eqref{eqn:one_shot_symmetric} ensures that the theoretical properties of the CRT $\pvals$ still hold for the one-shot CRT $\pvals$.
\begin{prop}
\label{prop:one_shot}
Consider a null variable $j$. Assume that the $\pval$ $p_j$ is obtained from \eqref{eqn:one_shot_procedure} and \eqref{eqn:one_shot_pvalue}, and that \eqref{eqn:one_shot_symmetric} holds. Then $p_j$ satisfies $\PP{p_j \leq \alpha} \leq \alpha$, for any $\alpha \in [0,1]$, and $p_j \independent z_j|\bX_{-j}, \bY$, where $z_j$ is defined in Procedure \ref{alg:inexact}.
\end{prop}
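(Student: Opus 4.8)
The plan is to reduce both claims --- validity of $p_j$ and the conditional independence $p_j \independent z_j \mid \bX_{-j}, \bY$ --- to a single conditional exchangeability statement, pushed through the equivariance hypothesis \eqref{eqn:one_shot_symmetric} in place of the $B+1$ separately computed CRT statistics. Write $\bX_j^{(0)} = \bX_j$ and set $W = \p{\bX_j^{(0)}, \bX_j^{(1)}, \dots, \bX_j^{(B)}}$. The first step is to check that, conditionally on $(\bX_{-j}, \bY)$, the $B+1$ coordinates of $W$ are i.i.d. Conditionally on $\bX_{-j}$: by construction $\bX_j^{(1)}, \dots, \bX_j^{(B)}$ are i.i.d.\ draws from the conditional law of $\bX_j$ given $\bX_{-j}$ and are jointly independent of $(\bX_j, \bY)$; since $j$ is null, $\bX_j^{(0)}$ is independent of $\bY$ and has that same conditional law. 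A factorization of the joint conditional distribution then shows that, given $\bX_{-j}$, the block $W$ is i.i.d.\ and independent of $\bY$, so the conditional law of $W$ given $(\bX_{-j}, \bY)$ is this same i.i.d.\ law; in particular $W$ is exchangeable conditionally on $(\bX_{-j}, \bY)$.

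The second step pushes this through the maps defining $p_j$ and $z_j$. Let $\mathcal{G}$ be the $\sigma$-field generated by $(\bX_{-j}, \bY)$ together with the \emph{unordered} collection $\cb{\bX_j^{(0)}, \dots, \bX_j^{(B)}}$. Conditionally on $\mathcal{G}$, the ordered vector $W$ is a uniformly random permutation $\Pi$ of its order statistics. By \eqref{eqn:one_shot_symmetric}, permuting the coordinates of $W$ permutes $\p{T_j^{(0)}, \dots, T_j^{(B)}}$ in the same way; hence, writing $t$ for the statistic vector computed from the order statistics of $W$ and $(\bX_{-j}, \bY)$, the $p$-value in \eqref{eqn:one_shot_pvalue} is a deterministic function of $\Pi$ and $t$, and depends on $\Pi$ only through $\Pi^{-1}(0)$: a short computation gives $p_j = \tfrac{1}{B+1}\bigl(1 + \#\cb{b \ge 1: t_{\Pi^{-1}(0)} \le t_{b}}\bigr)$. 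Once ties among the $t_b$ are broken at random as prescribed in \eqref{eqn:one_shot_pvalue} (equivalently, after appending i.i.d.\ uniform perturbations, which preserves conditional exchangeability), this is a bijection of the rank of $t_{\Pi^{-1}(0)}$ among $t_0, \dots, t_B$; since $\Pi$ is uniform on the symmetric group and independent of $\mathcal{G}$, that rank is uniform on $\cb{0, \dots, B}$, so the conditional law of $p_j$ given $\mathcal{G}$ is $\Unif\cb{1/(B+1), 2/(B+1), \dots, 1}$ --- a fixed distribution not depending on the realization of $\mathcal{G}$.

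Both conclusions then drop out. For validity, $\PP{p_j \le \alpha \mid \mathcal{G}} = \lfloor (B+1)\alpha \rfloor/(B+1) \le \alpha$, and taking expectations gives $\PP{p_j \le \alpha} \le \alpha$ for all $\alpha \in [0,1]$. For the conditional independence, observe that $z_j = Z\p{\bX_j, \bX_j^{(1)}, \dots, \bX_j^{(B)}, \bX_{-j}, \bY}$ is symmetric in its first $B+1$ arguments, hence is $\mathcal{G}$-measurable; therefore, for any Borel set $A$ and any $\alpha$,
\[
\PP{p_j \le \alpha,\ z_j \in A \mid \bX_{-j}, \bY}
= \EE{\one\cb{z_j \in A}\,\PP{p_j \le \alpha \mid \mathcal{G}} \mid \bX_{-j}, \bY}
= \PP{p_j \le \alpha \mid \bX_{-j}, \bY}\,\PP{z_j \in A \mid \bX_{-j}, \bY},
\]
using that $\PP{p_j \le \alpha \mid \mathcal{G}}$ equals the constant $\lfloor (B+1)\alpha\rfloor/(B+1)$. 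This is precisely $p_j \independent z_j \mid \bX_{-j}, \bY$, and Proposition~\ref{theo:pj_zj} is recovered by taking the one-shot statistics to be the $B+1$ separately computed CRT statistics.

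I expect the main obstacle to be the bookkeeping in the first step: what is needed is that $W$ is i.i.d.\ (hence exchangeable) conditionally on $(\bX_{-j}, \bY)$, not merely conditionally on $\bX_{-j}$, and this is exactly the place where the null hypothesis $X_j \independent Y \mid X_{-j}$ and the ``independently of $\bX_j$ and $\bY$'' clause of the resampling step must be combined --- dropping either one breaks the argument. Everything downstream is the standard ``rank of an exchangeable coordinate is uniform'' computation together with the symmetry/measurability observation for $z_j$; the only remaining technicality, possible ties in $T$, is absorbed into the random tie-breaking already built into \eqref{eqn:one_shot_pvalue}.
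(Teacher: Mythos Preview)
Your proposal is correct and follows essentially the same approach as the paper: both establish that $\p{\bX_j^{(0)}, \dots, \bX_j^{(B)}}$ is conditionally i.i.d.\ (hence exchangeable) given $(\bX_{-j}, \bY)$, push this through the equivariance \eqref{eqn:one_shot_symmetric} and the symmetry of $Z$, and conclude that $p_j$ has a fixed $\Unif\{1/(B+1),\dots,1\}$ conditional law. The only cosmetic difference is packaging---you condition on the full unordered multiset $\mathcal{G}$ and derive $p_j \independent z_j$ from $\mathcal{G}$-measurability of $z_j$, whereas the paper conditions directly on $(\bX_{-j},\bY,z_j)$; these are equivalent since a symmetric function of $W$ is exactly a function of its multiset.
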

\begin{proof}
  For the sake of notation, set $\bX_j^{(0)} = \bX_j$. Consider a null
  $j$. By construction of $\bX_j^{(b)}$, all the $\bX_j^{(b)}$'s are
  i.i.d.~conditional on $\bX_{-j}$ and $\bY$. Thus for any permutation
  $\rho$ of $\cb{0, \dots, B}$,
  $\p{\bX_j^{(0)}, \dots, \bX_j^{(B)}} \stackrel{d}{=}
  \p{\bX_j^{(\rho(0))}, \dots, \bX_j^{(\rho(B))}} \Big| \bX_{-j},
  \bY$. The symmetry of $Z$ in its first $B+1$ arguments further
  ensures that
  $z_j = Z\p{\bX_j^{(0)}, \bX_j^{(1)}, \dots, \bX_j^{(B)}, \bX_{-j},
    y} = Z\p{\bX_j^{(\rho(0))}, \bX_j^{(\rho(1))}, \dots,
    \bX_j^{(\rho(B))}, \bX_{-j}, \bY}$. Combining these facts, we have
\[ \p{\bX_j^{(0)}, \dots, \bX_j^{(B)}} \stackrel{d}{=} \p{\bX_j^{(\rho(0))}, \dots, \bX_j^{(\rho(B))}} \Big| \bX_{-j}, \bY, z_j.\]
By property \eqref{eqn:one_shot_symmetric}, 
\[ \p{T_j^{(\rho(0))}, \dots, T_j^{(\rho(B))}} =
  T\p{\bX_j^{(\rho(0))}, \bX_j^{(\rho(1))}, \dots, \bX_j^{(\rho(B))},
    \bX_{-j}, \bY}. \] This term has the same distribution as
$T\p{\bX_j^{(0)}, \bX_j^{(1)}, \dots, \bX_j^{(B)}, \bX_{-j}, \bY}$
conditional on $\bX_{-j}, \bY$ and $z_j$.  Since
$T\p{\bX_j^{(0)}, \bX_j^{(1)}, \dots, \bX_j^{(B)}, \bX_{-j}, \bY}=
\p{T_j^{(0)}, \dots, T_j^{(B)}}$, we have
\[ \p{T_j^{(\rho(0))}, \dots, T_j^{(\rho(B))}} \stackrel{d}{=}
  \p{T_j^{(0)}, \dots, T_j^{(B)}} \Big| \bX_{-j}, \bY, z_j.\] This
implies that conditional on $\bX_j, \bY, z_j$,
$p_j \sim \operatorname{Unif}\cb{\frac{1}{B+1}, \frac{2}{B+1}, \dots,
  1}$. Note that the same holds without conditioning on $z_j$, i.e.,
conditioning on $\bX_j, \bY$ only. Hence
$p_j \independent z_j | \bX_{-j}, \bY$. Finally, the claim that
$\PP{p_j \leq \alpha} \leq \alpha$ follows from the fact that the
distribution of
$\operatorname{Unif}\cb{\frac{1}{B+1}, \frac{2}{B+1}, \dots, 1}$ is
stochastically greater than $\operatorname{Unif}[0,1]$.
\end{proof}
In terms of FDR control, the theorems from Section
\ref{section:theory} still hold for either the split or symmetric statistic
version of our variable selection method applied with the one-shot CRT
$\pvals$.

As an illustration, we show the average computation time of the
one-shot CRT and the original CRT (on all variables together) on
synthetic datasets in Table \ref{table:time}. The number of
randomizations is set to $B = 9$, and other details of the simulation
study are included in Appendix
\ref{subsection:simulation_details_table}. Compared with the original
CRT, the one-shot CRT reduces the computation time by a factor of
roughly $1/B$ as expected.
\begin{table}[h]
\centering
\caption{Average computation times (seconds)}
\label{table:time}
\begin{tabular}{|p{2.8cm} |p{2.8cm}| p{2.8cm}| p{2.95cm}| p{2.95cm}|}
\hline
Setting   & Linear & Logistic & Non-linear 1 & Non-linear 2 \\ 
\hline
Dimension &$n = 300, p = 300$ &$n = 300, p = 300$ &$n = 500, p = 200$ & $n = 500, p = 200$     \\ 
\hline
Statistics are  computed with& lasso  & glmnet & gradient boosting  & gradient boosting  \\
\hhline{|=|=|=|=|=|}
Original CRT   & 694 & 709          & 2811       & 2672    \\
 \hline
One-shot CRT     & 89  & 91             & 248         & 277     \\
 \hline
\end{tabular}
\end{table}

Adding $B$ predictors into a machine learning
algorithm cannot be used if we are combining CRT with BHq. As BHq
requires much finer $\pvals$, $B$ needs to be much larger. Adding many
irrelevant predictors into a regression problem is not generally a
wise move.

More generally, the computational problem posed by the combination of
the CRT and BHq has been considered by \citet{tansey2018holdout} and
\citet{liu2020fast}, which propose separate methods to reduce the
running time.  \citet{tansey2018holdout} consider data splitting: the
algorithm trains a complicated machine learning model on the first
part of the data and obtains $\pvals$ on the second part of the data
making use of the trained model. The data splitting trick ensures that
the complicated machine learning model will be fitted only once, and
hence makes the algorithm much faster and computationally
feasible. \citet{liu2020fast} propose a technique called
\emph{distillation}. Their proposed algorithm distills all the
high-dimensional information in $\bX_{-j}$ about $\bY$ into a
low-dimensional representation, computes the test statistic as a
function of $\bX_j$, $\bY$, and the low-dimensional representation,
and obtains the $\pvals$ based on the test statistics. The computation
time is much lower since the expensive model fitting takes place in
the distillation step, which is performed only once for each $j$.  The
two methods both give marginally valid $\pvals$, but the $\pvals$
would not be independent in general, and there is no theoretical
guarantee on FDR control. (Both papers confirm in their simulations
that the FDR of each method is well controlled empirically.)

\section{Simulations}
\label{section:simulation}

In this section, we demonstrate the performance of our methods on synthetic data.
Software for our method is available from \url{https://github.com/lsn235711/sequential-CRT}, along with code to reproduce the analyses. We include in Appendix \ref{section:simulation_details} implementation details and additional simulation studies. 

\subsection{Comparison of the original CRT and the one-shot CRT}
\label{subsection:simulation_small}
We compare the proposed symmetric statistic version of the sequential CRT (with one-shot CRT) and the sequential CRT (with the original CRT). We also
compare our methods with Model-X knockoffs as a benchmark.  We
consider a few different settings: linear/non-linear(tree like)
models, and Gaussian/binomial responses. In all settings, the number of true nonnulls is set to be 20. For the distribution of $X$,
we consider a Gaussian autoregressive model and a hidden Markov
model. To compute test statistics, we consider algorithms including
$L_1$-regularized regression (glmnet) and gradient boosting with
regression trees as base learners. The statistic $z_j$ in Procedure
\ref{alg:inexact} is taken to be
$z_j = \max_{b \in \cb{0, \dots, B}}T_j^{(b)}$, where $T_j^{(b)}$ is
the test statistic computed in the CRT. Knockoffs are constructed with
the Gaussian semi-definite optimization algorithm
\citep{candes2018panning} for the Gaussian autoregressive model, and
with Algorithm 3 from \citep{sesia2019gene} for the hidden Markov
model. Details of the simulation study are included in Appendix
\ref{subsection:simulation_details_small}. Figure
\ref{fig:fdr_power_small} compares the performance of the above
methods in terms of empirical false discovery rate and power averaged
over 100 independent replications. In all settings, the sequential CRT appears
to control the FDR around the desired level $q = 0.1$.  In terms of
power, the performance of the one-shot CRT appears to be similar to
that of the original CRT in most of the settings.  Compared to
knockoffs, the sequential CRT (both original CRT and
one-shot CRT) is more powerful.

\begin{figure}
\begin{subfigure}{\textwidth}
\caption{$X$ follows a Gaussian AR model}
\includegraphics[width = \textwidth]{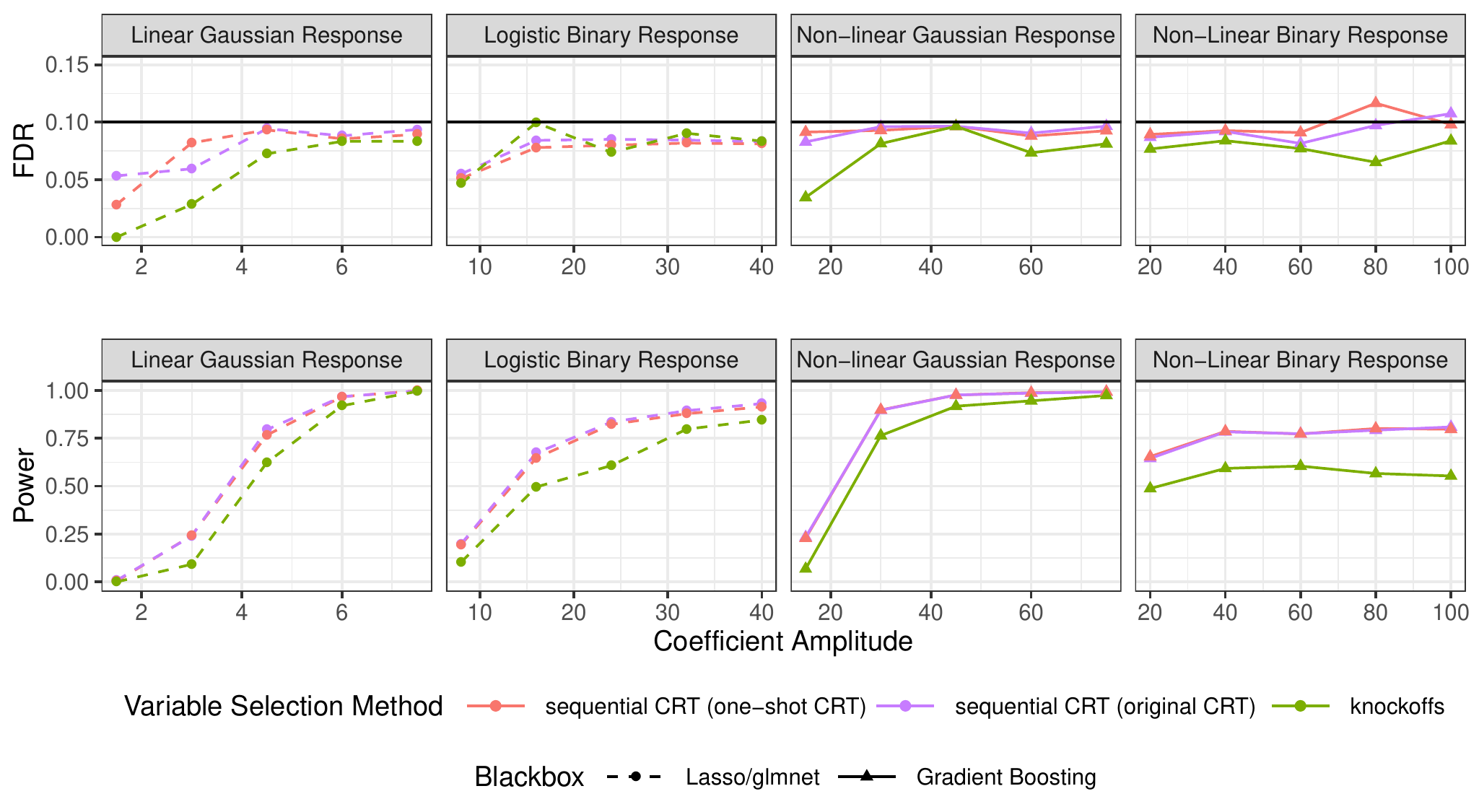}

\end{subfigure}
\noindent\rule{\textwidth}{0.4pt}
\begin{subfigure}{\textwidth}
\caption{$X$ follows an HMM}
\includegraphics[width = \textwidth]{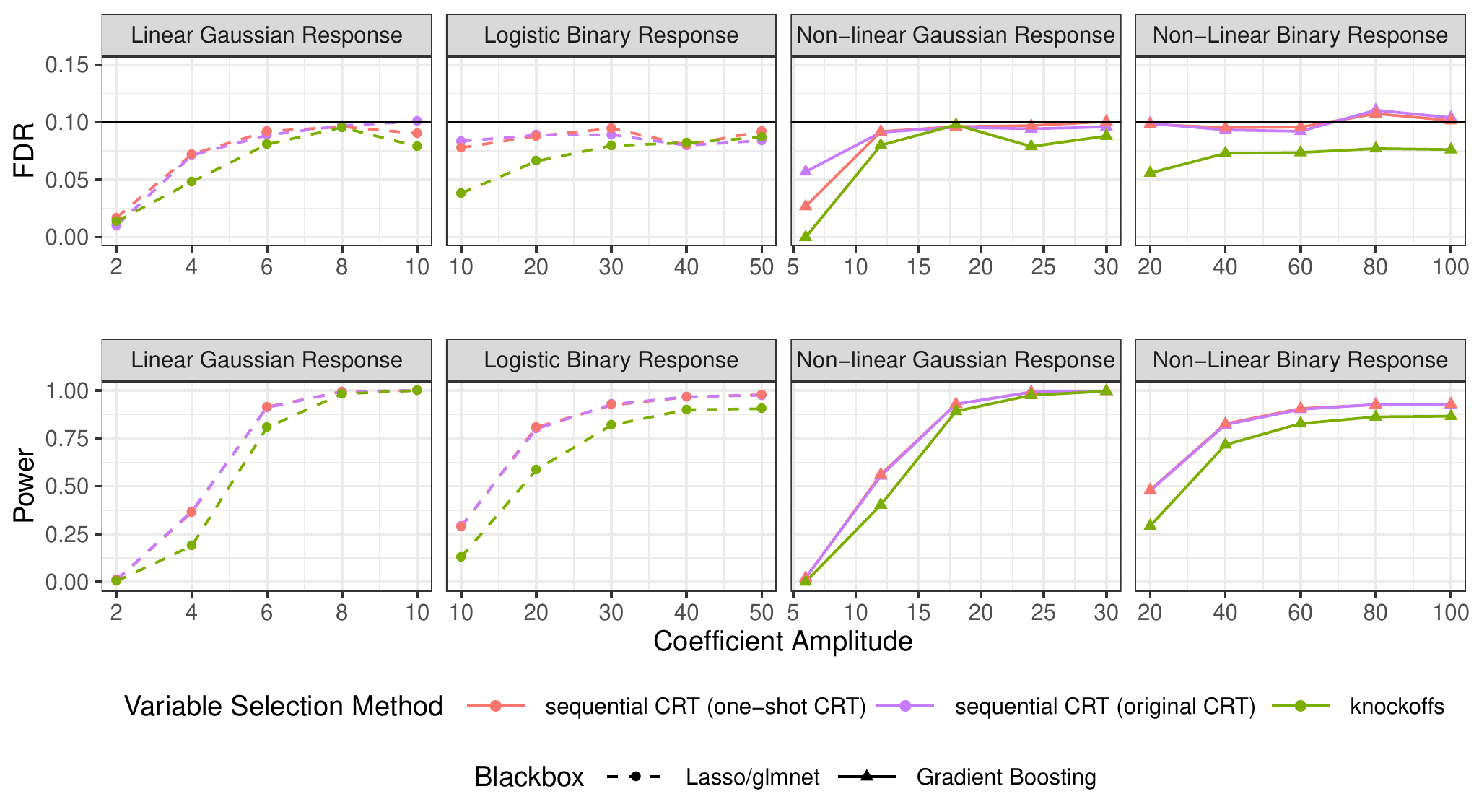}
\end{subfigure}
\caption{Performance of the sequential CRT with the original CRT and one-shot CRT (both use the symmetric statistic) and knockoffs on small synthetic datasets. The nominal false discovery rate level is 10\%. Results are averaged over 100 independent experiments.}
\label{fig:fdr_power_small}
\end{figure}

\subsection{Comparison of the sequential CRT with knockoffs}
\label{subsection:simulation_large}
 We compare the proposed split version (Procedure \ref{alg:exact}) and symmetric statistic version (Procedure \ref{alg:inexact}) of the sequential CRT with Model-X knockoffs. We run the sequential CRT with one-shot CRT. 
We consider similar settings as in the above Section \ref{subsection:simulation_small}. Since the computation time of one-shot CRT is much lower compared to the original CRT, here we run the experiments on larger datasets. In all settings in this section, the number of nonnulls is set to be 50.
Other details can be found in Section \ref{subsection:simulation_small} and Appendix \ref{subsection:simulation_details_large}. Figure \ref{fig:fdr_power_large} compares the performance of the above methods in terms of empirical false discovery rate and power averaged over 100 independent replications. In all settings, the sequential CRT appears to control the FDR around the desired level $q = 0.1$. In terms of power, the symmetric statistic version is comparable to knockoffs. 

\begin{figure}
\begin{subfigure}{\textwidth}
\caption{$X$ follows a Gaussian AR model}
\includegraphics[width = \textwidth]{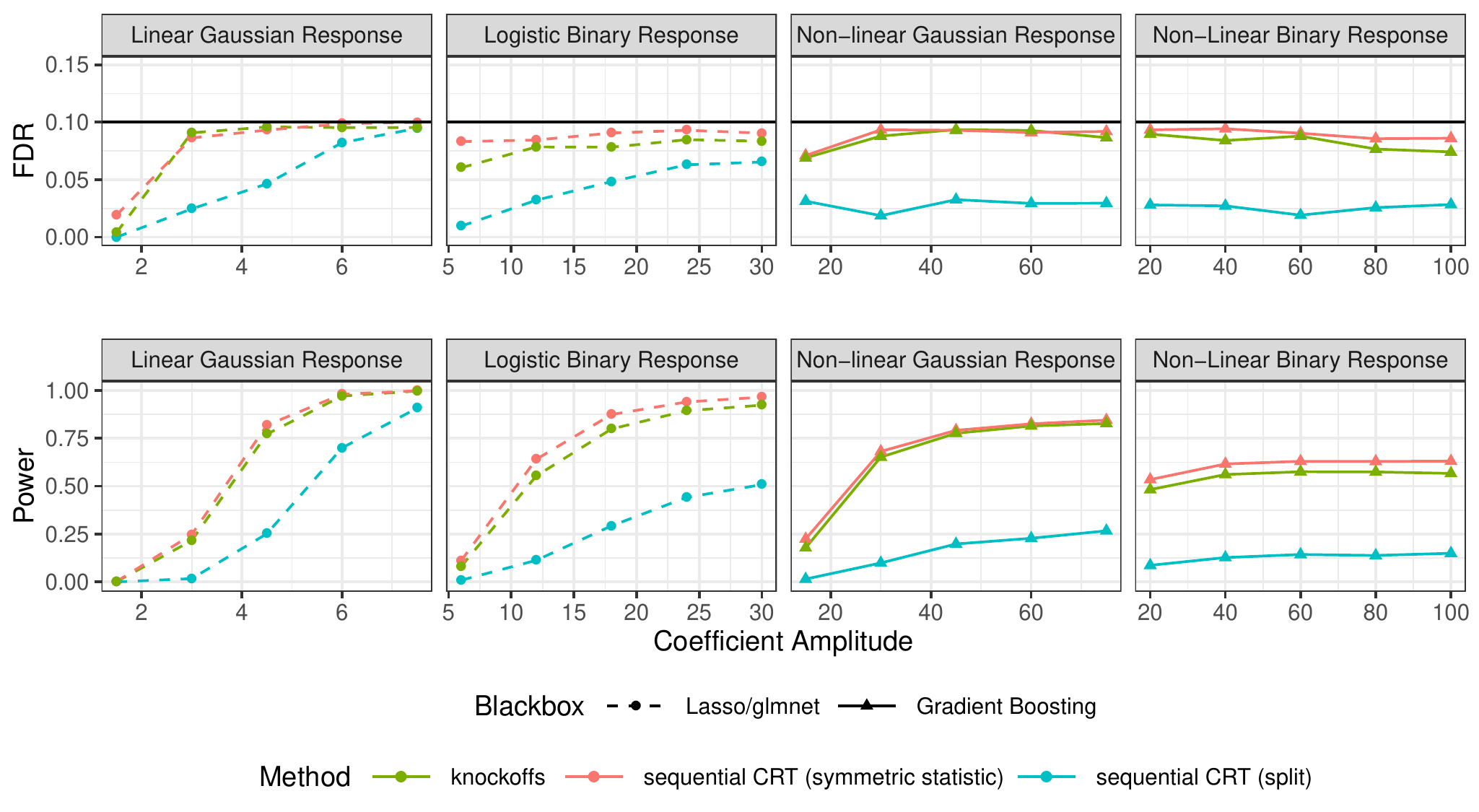}

\end{subfigure}
\noindent\rule{\textwidth}{0.4pt}
\begin{subfigure}{\textwidth}
\caption{$X$ follows an HMM}
\includegraphics[width = \textwidth]{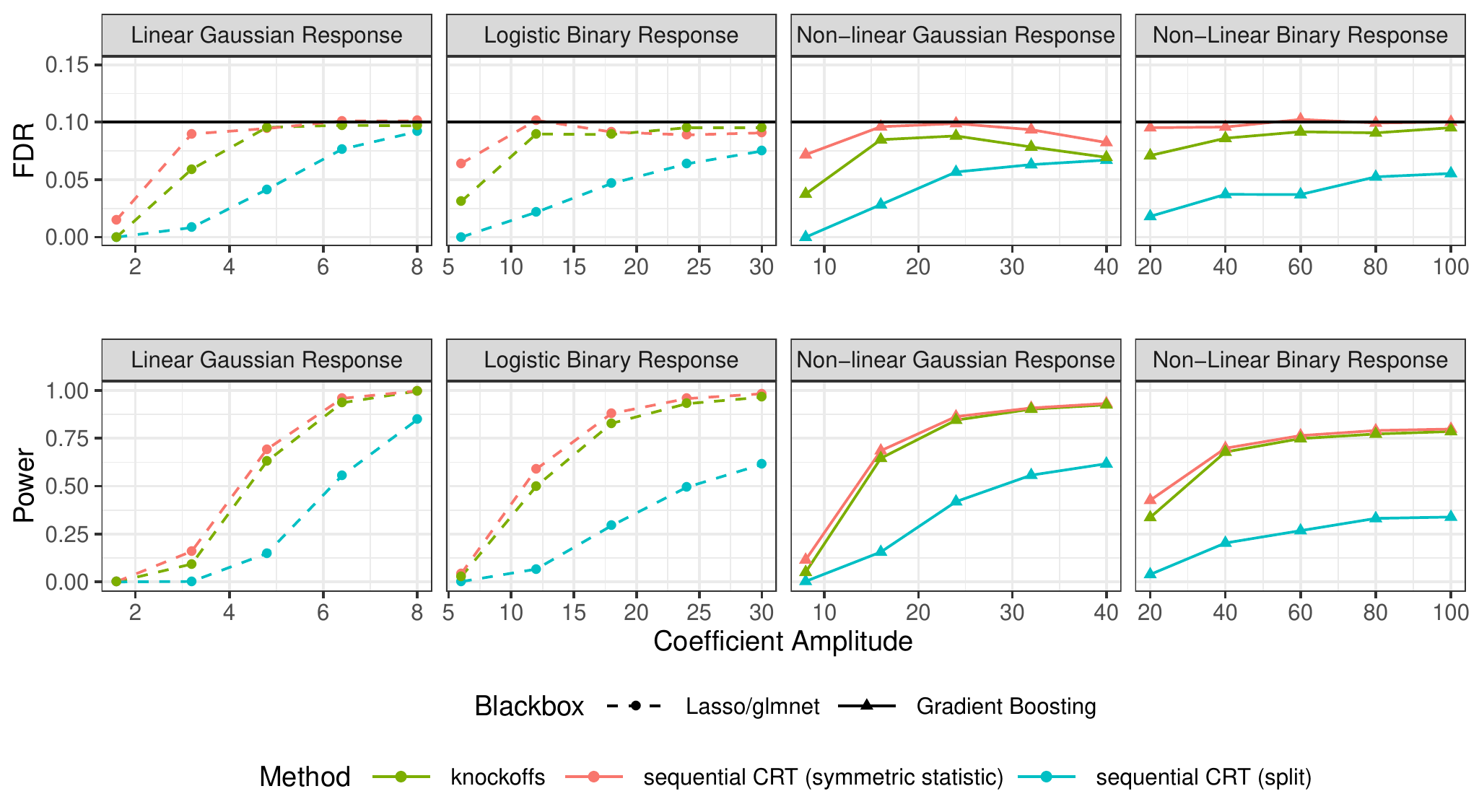}
\end{subfigure}
\caption{Performance of the proposed split version and symmetric statistic version of the sequential CRT compared to knockoffs on larger synthetic datasets. The nominal false discovery rate level is 10\%. Results are averaged over 100 independent experiments.}
\label{fig:fdr_power_large}
\end{figure}

\subsection{The role of the number of potential discoveries}
\label{subsection:simulation_num_nonnull}
\begin{figure}
\includegraphics[width = \textwidth]{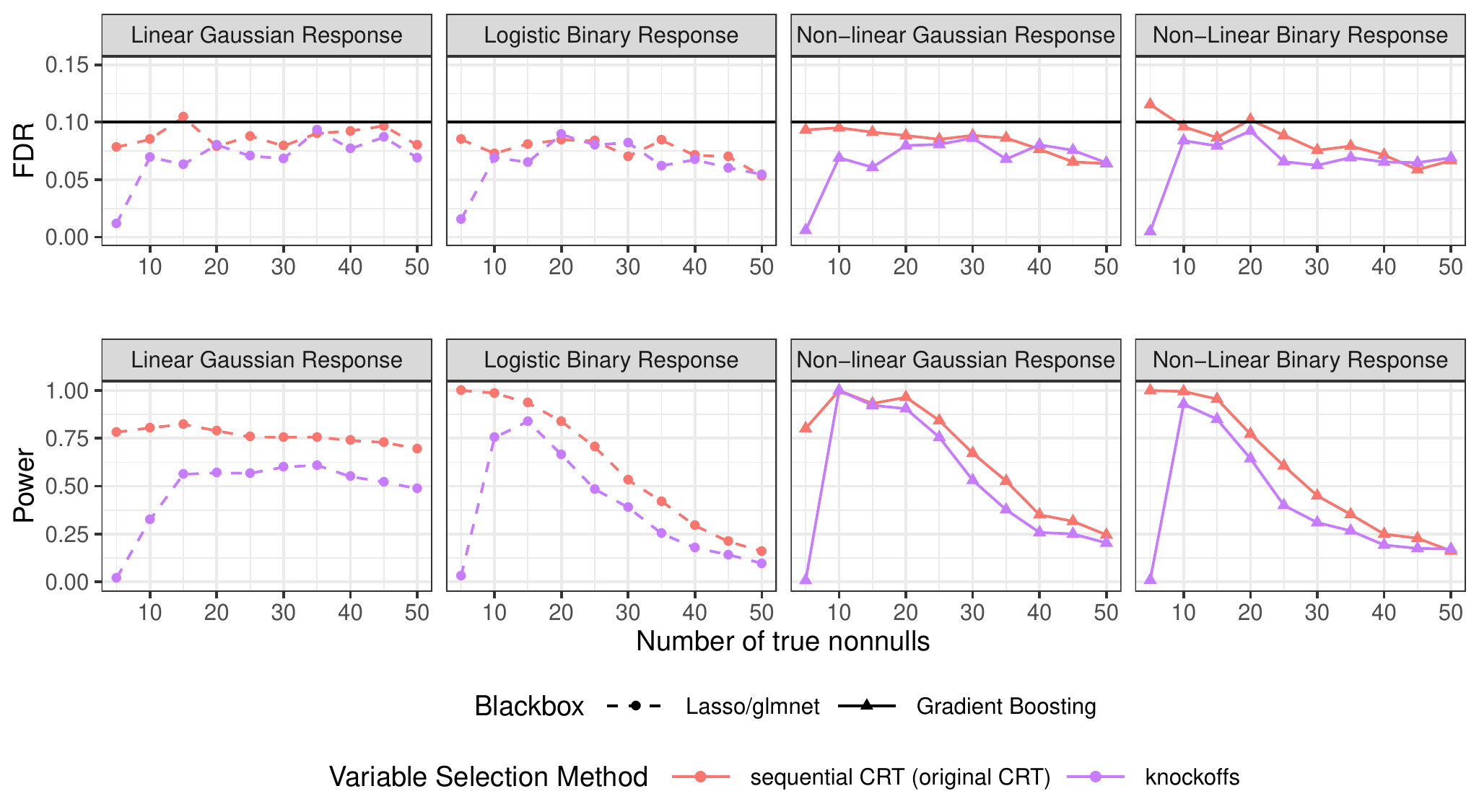}
\caption{Performance of the sequential CRT compared to knockoffs as a
  function of the number of non-nulls. The nominal false discovery
  rate level is 10\%. Results are averaged over 100 independent
  experiments.}
\label{fig:fdr_power_num_nonnull}
\end{figure}

Comparing the results from Sections \ref{subsection:simulation_small}
and \ref{subsection:simulation_large}, we observe that the power gain
of the sequential CRT vis-a-vis model-X knockoffs is more noticeable
when the number of nonnulls is small.  To understand this phenomenon,
return to the connection between the knockoff filter and Selective
SeqStep+. It was shown by \citet{barber2015controlling} that the
knockoff filter can be cast as a special case of the Selective
SeqStep+ applied to ``one-bit'' p-values with $c$ chosen to be 0.5.
When $q = 0.1$ and $c = 0.5$, the selected set of the Selective
SeqStep+ becomes
$\hat{\mathcal{S}} = \cb{ j \leq \hat{k}: p_{j} \leq c}$, where
\begin{equation}
\label{eqn:seqstep_knockoffs}
\hat{k} =\max \left\{k \in \cb{1,\dots, p}: \frac{1+\#\left\{j \leq k: p_{j}>0.5\right\}}{\#\left\{j \leq k: p_{j} \leq 0.5\right\} \vee 1} \leq 0.1 \right\}. 
\end{equation}
When the number of nonnulls is small, the set above may be
empty. Consider an example where the number of nonnulls is 8. Even in
the ideal case where all the nonnulls have vanishing $\pvals$ and the
nonnulls appear early in the sequence, for any $k \geq 8$, the left
hand side in the inequality \eqref{eqn:seqstep_knockoffs} becomes
\begin{equation}
\frac{1 + \# \cb{ \text{null } 9\leq j\leq k: p_j \geq 0.5}}{8 + \# \cb{ \text{null } 9\leq j \leq k: p_j < 0.5}} \gtrsim \frac{1}{8} > 0.1.
\end{equation}
Therefore, most of the time there is no $k$ satisfying the inequality \eqref{eqn:seqstep_knockoffs}, thus we make no rejections. 
Hence the power will be low. 

The sequential CRT, however, will not suffer from the same problem. We recall that throughout this paper, we take $c=0.1$ in the sequential CRT. With $c = 0.1$, the definition of $\hat{k}$ becomes
\begin{equation}
\hat{k} =\max \left\{k \in \cb{1,\dots, p}: \frac{1+\#\left\{j \leq k: p_{j}>0.1\right\}}{\#\left\{j \leq k: p_{j} \leq 0.1\right\} \vee 1} \leq 0.9 \right\}. 
\end{equation}
With a good ordering the $\pvals$, the left hand side of the
inequality can easily become lower than 0.9, a much less stringent
threshold.

We run simulations varying the number of nonnulls.  We compare
the proposed symmetric statistic version of the sequential CRT with
model-X knockoffs. We run the sequential CRT with one-shot CRT.  We
consider settings as in Section
\ref{subsection:simulation_small}; details are  in Appendix
\ref{subsection:simulation_details_num_nonnull}. Figure
\ref{fig:fdr_power_num_nonnull} compares the performance of the above
methods in terms of empirical false discovery rate and power averaged
over 100 independent replications. In all settings, the sequential CRT
appears to control the FDR around the desired level $q = 0.1$. In
terms of power, we see that the sequential CRT overcomes ``the threshold
phenomenon'' discussed earlier. 

\subsection{Choice of the threshold $c$ and the number $B$ of randomizations}
\label{subsection:simulation_Bc}
We here study the effect on power of the threshold $c$ and of the
  number $B$ of randomizations. We focus on the sequential CRT
  (symmetric statistics version with one-shot CRT). Intuitively, we
  expect the procedure with a smaller $c$ and a smaller $B$ to be more
  powerful. With a smaller $c$, our procedure is more likely to overcome ``the threshold phenomenon'' as discussed in Section \ref{subsection:simulation_num_nonnull}. When using a smaller value of $B$, we make sure
  that we are not including too many irrelevant predictors in the
  machine learning algorithm while running the one-shot CRT
  algorithm. In addition, when we compute the statistics $z_j$ in
  Algorithm \ref{alg:inexact}, we take the maximum (or the difference
  between the maximum and the median) of the feature importance
  statistics of $\bX_j, \bX_j^{(1)}, \dots, \bX_j^{(B)}$; thus it is
  helpful to have a smaller $B$ so that the signal, i.e., the feature
  importance statistics of $\bX_j$ has a chance of standing out. If we
  use an extremely large value of $B$, there is a chance that the
  maximum of the feature importance statistics of
  $\bX_j^{(1)}, \dots, \bX_j^{(B)}$ exceeds that of $\bX_j$.  That
  said, $c$ and $B$ cannot be too small at the same time. At the very
  least, in order for our procedure to make any rejection, we need to
  have some $\pvals$ no larger than $c$. Since the $\pvals$ are
  bounded below by $1/(B+1)$, a necessary condition for not being
  powerless is to have $c \geq 1/(B+1)$.

\begin{figure}
\centering
\begin{subfigure}[b]{0.9\textwidth}
      \caption{Small datasets}
         \centering
         \includegraphics[width = \textwidth]{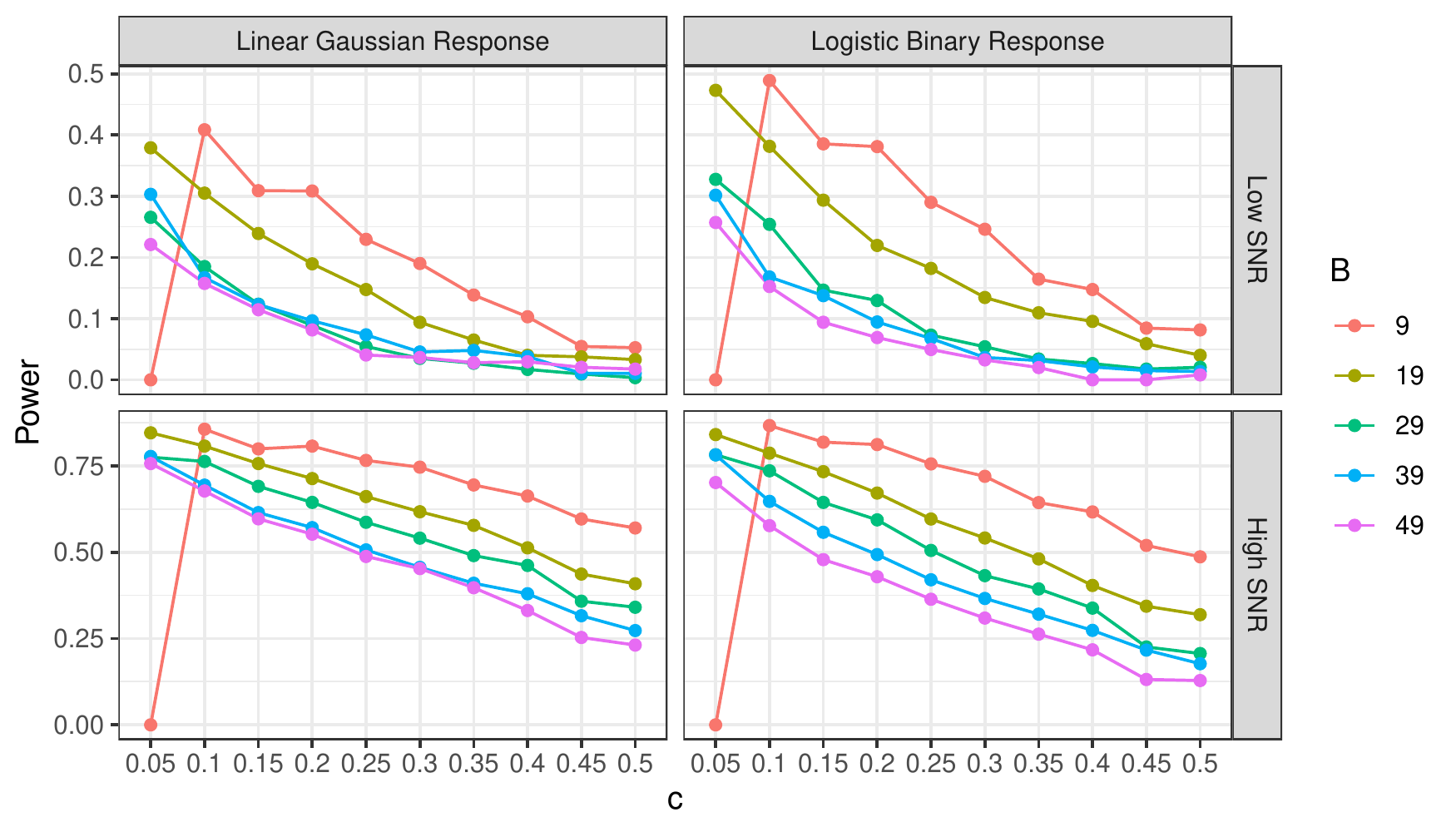}
         \label{fig:Bc_small}
 \end{subfigure}
\begin{subfigure}[b]{0.9\textwidth}
\noindent\rule{\textwidth}{0.4pt}
  \caption{Large datasets}
\centering
 \includegraphics[width = \textwidth]{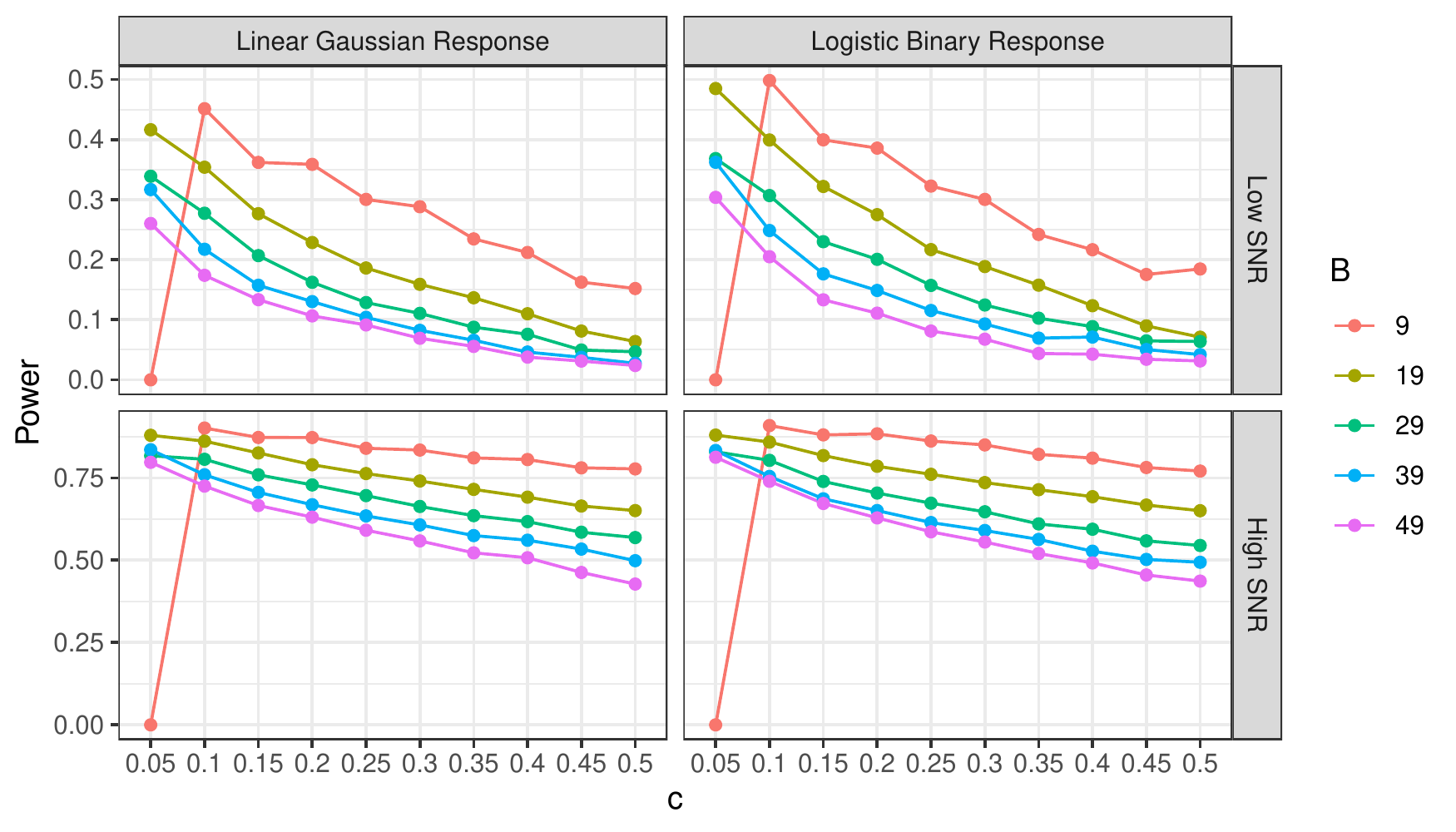}
 \label{fig:Bc_large}
\end{subfigure}
\caption{Power of the sequential CRT (symmetric statistics version with one-shot CRT) with different choices of threshold $c$ and number of randomizations $B$. The nominal false discovery
  rate level is 10\%. Results are averaged over 100 independent
  experiments.}
\label{fig:Bc}
\end{figure}

The above heuristic arguments are confirmed in simulation
studies. Figure \ref{fig:Bc} compares the performance of the
sequential CRT with different values of $c$ and $B$.\footnote{In our
  simulation studies, we take $B = 10k + 9$ for some
  $k \in \mathbb{Z}$ because we want to make $B+1$ a multiple of 10,
  and thus make it possible for $\ell/(B+1) = c$ to hold for some
  integer $\ell$. } We consider several settings: linear/logistic
models, small/large synthetic datasets, low/high signal to noise
ratios. We observe the same phenomenon in all settings. Namely, power
increases as $B$ decreases and $c$ decreases with the caveat that
they cannot both be small at the same time. It appears that the pair
$(B, c) = (9, 0.1)$ is the most powerful in all settings, justifying
the choices we made in earlier simulation studies. In Appendix
\ref{subsection:simulation_details_Bc}, we provide implementation
details to reproduce Figure \ref{fig:Bc} and additionally show that
the FDR is controlled at the nominal level for all choices of $B$ and
$c$.

\section{Real data application}
\label{section:application}
We now apply our method to a breast cancer dataset to identify gene expressions on which the cancer stage depends. The
dataset is from \cite{curtis2012genomic}, which consists of
$n = 1,396$ staged cases of breast cancer. For each case, the data
consists of expression level (mRNA) and copy number aberration (CNA)
of $p = 164$ genes. The goal is to identify genes whose expression
level is not independent of the cancer stage, conditioning on all
other genes and CNAs.  The response variable, the progression stage of
breast cancer, is binary.  We take the dataset from
\citep{liu2020fast} and pre-process the data as in their work. We
refer to Section~5 and Section~E of \citep{liu2020fast} for further
details. Following \citep{liu2020fast}, we model the distribution of
expression levels using a multivariate Gaussian.  The nominal false
discovery rate is set to be 10\%.

Below, we compare the following  methods:
\begin{enumerate}
\item \emph{Sequential CRT}: We consider Procedure \ref{alg:inexact} with one-shot CRT. We take the SeqStep threshold $c$ to be 0.1, and the number of randomizations to be 9. We take the importance statistics to be the absolute values of the coefficient of a cross-validated $L_1$-penalized logistic regression. 
\item \emph{Distilled CRT} \citep{liu2020fast}. We consider both $\operatorname{d}_0$CRT and $\operatorname{d}_{\operatorname{I}}$CRT; we refer to Section 2.3 and 2.4 of \citep{liu2020fast} for specific constructions of the dCRT. Since the response variable is binary, the distillation step is done by a cross-validated $L_1$-penalized logistic regression. 
\item \emph{HRT} \citep{tansey2018holdout}. Algorithm 1 of \citep{tansey2018holdout} is implemented with a logistic model fitted by a cross-validated $L_1$-penalized logistic regression and a data split of 50\%-50\%. 
\item \emph{Knockoffs} \citep{candes2018panning}. Knockoffs are constructed with the Gaussian semi-definite optimization algorithm. We take the
feature importance statistic to be the glmnet coefficient difference. 
\end{enumerate}
For \emph{Distilled CRT} and \emph{HRT}, we reproduce the analysis from \citet{liu2020fast}. All methods considered above are randomized procedures, i.e., different runs of the same algorithm produce possibly different sets of discoveries. We run each method 100 times and compare the number of discoveries.
 Figure \ref{fig:real_data_num_disc} is a boxplot showing the number of discoveries across random seeds. Our procedure appears to make more discoveries on average than the other methods. Compared to knockoffs, our procedure has less variability. 

\begin{figure}
\centering
\includegraphics[width = 0.8 \textwidth]{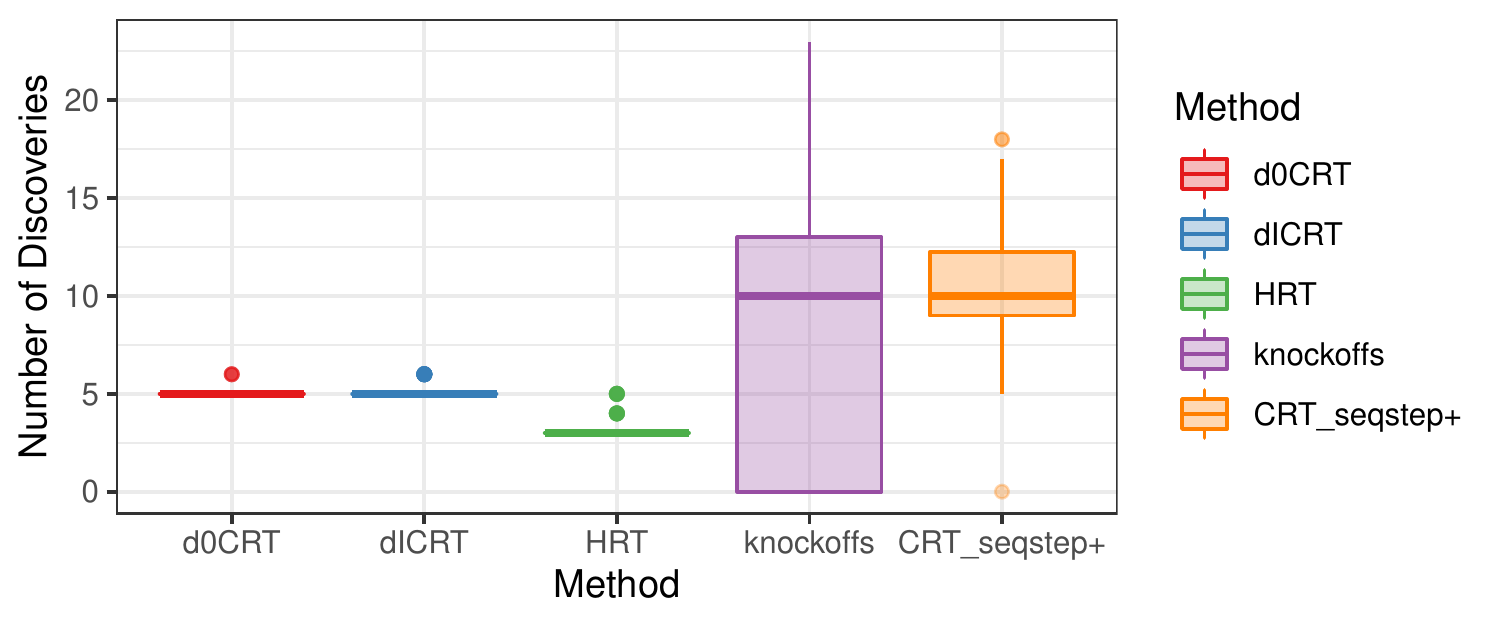}
\caption{Number of discoveries on the breast cancer dataset}
\label{fig:real_data_num_disc}
\end{figure}

We present the full list of genes discovered by the sequential CRT in Table
\ref{table:discoveries_validate}. 
Note here that the sequential CRT is random, thus different runs could produce different results. Following \cite{candes2018panning} and \cite{sesia2019gene}, to make the discoveries more ``reliable", we run the proposed method multiple times and only show the genes that are
selected by our procedure more than 10\% of the time. The 10\% level is somewhat arbitrary and we do not make any claims about the discoveries exceeding this threshold. We leave the study of setting a threshold achieving theoretical error control guarantees to future research. We however observe that all discoveries above the 10\% threshold were shown in other independent studies to be related to the development of cancer.

\begin{table}
\centering
\begin{tabular}{ |c |c |c|c|} 
\hline
Selection frequency & Gene &  Discovered by dCRT? &Confirmed in?  \\
\hline
99\% & {\em HRAS} & Yes&  \citet{geyer2018recurrent}\\
\hline
99\% & {\em RUNX1} & Yes & \citet{li2019runx1}\\
\hline
96\% & {\em FBXW7} & Yes &  \citet{liu2019fbxw7}\\ 
\hline
95\% & {\em GPS2} & Yes & \citet{huang2016g}\\ 
\hline
95\% & {\em NRAS} & & \citet{galie2019ras}  \\ 
\hline
82\% & {\em FANCD2} & & \citet{rudland2010significance}  \\ 
\hline
78\% & {\em MAP3K13} &Yes & \citet{han2016microrna}\\ 
\hline
76\% & {\em AHNAK} &  & \citet{chen2017ahnak}\\ 
\hline
67\% & {\em MAP2K4} &  &\citet{liu2019map2k4}\\ 
\hline
58\% & {\em CTNNA1} &  &\citet{clark2020loss} \\ 
\hline
35\% & {\em NCOA3} & & \citet{gupta2016ncoa3}\\ 
\hline
13\% & {\em LAMA2} & & \citet{liang2018targeted}\\ 
\hline
10\% & {\em GATA3} & & \citet{mehra2005identification}\\ 
\hline
\end{tabular}
\caption{Discoveries made by Procedure \ref{alg:inexact} on the breast cancer dataset}
\label{table:discoveries_validate}
\end{table}

\section{Discussion}
\label{section:future_work}
\paragraph{Comparison with knockoffs}
In this paper, we proposed a variable selection procedure, the sequential CRT. In comparison with model-X knockoffs, the proposed sequential CRT is generally more powerful than model-X knockoffs as shown in the simulation studies in Section \ref{section:simulation}. Specifically, we observe a much more noticeable power gain of the sequential CRT than is observed by model-X knockoffs when the number of nonnulls is small. In addition to the power gain, we note that the sequential CRT has another advantage over model-X knockoffs: it is usually easier to sample from the conditional distribution of $X_j \mid X_{-j}$ than to generate knockoffs, especially for complicated joint distribution of $X$. 

\paragraph{Derandomizing the sequential CRT}
Like many other Model-X procedures (e.g.~Model-X knockoffs,
  distilled CRT, etc), the sequential CRT is a randomized
  procedure. In other words, different runs of the method produce
  might produce different selected sets. When the method is applied in
  practice, one would report those features whose selection frequency
  exceeds a threshold along with the corresponding
  frequencies. \citet{ren2020derandomizing} studies the problem of
  derandomizing knockoffs. It will be interesting to study whether it
  is possible to derandomize the proposed method so that results are
  more consistent across different runs.

\paragraph{Theoretically validating power gains}
While this paper demonstrates enhanced statistical power through
simulations, it would be interesting to theoretically validate power
gains. Intuitively, compared to Model-X knockoffs, the proposed method
effectively reduces the number of covariates by a factor of 2 when
computing feature importance statistics. It would be of interest to
understand theoretically how important such reduction is in terms of
statistical power.

\paragraph{Robustness to misspecification in the distribution of the covariates}
Another interesting direction for future work is to study the
    robustness of the sequential CRT to misspecification in the
    distribution of the covariates. When the distribution of $X$ is
    known only approximately, \citet{barber2020robust} quantifies the
    possible FDR inflation of model-X knockoffs; and
    \citet{berrett2020conditional} bounds the inflation in type-I
    error of the CRT. It will be interesting to evaluate the FDR
    inflation of the sequential CRT both empirically and
    theoretically.

\section*{Acknowledgements}

E. C. was supported by Office of Naval Research grant N00014-20-12157,
by the National Science Foundation grants OAC 1934578 and DMS 2032014,
and by the Simons Foundation under award 814641. S. L. was supported
by the National Science Foundation grant OAC 1934578.

\bibliographystyle{unsrtnat}
\bibliography{ref}

\newpage
\begin{appendix}
\section{Simulation Details and Additional Simulation Studies}
Software for our method is available from \url{https://github.com/lsn235711/sequential-CRT}, along with code to reproduce the analyses.
\label{section:simulation_details}
\subsection{Details of the simulation study in Section \ref{subsection:simulation_small}}
\label{subsection:simulation_details_small}
The samples are generated in the following way. In all examples, the samples are i.i.d.~copies. 
\begin{enumerate}
	\item 
		The explanatory variables $\bX$ are generated from an AR(1) model with correlation parameter $\rho = 0.5$.
		\begin{enumerate}
		\item Conditional linear model with $n = 300$
                  observations and $p = 300$ variables. We set
                  $Y = X^\top \beta + \epsilon$, where
                  $\epsilon \sim \mathcal{N}(0,1)$. The vector $\beta$
                  has $20$ non-zero entries equal to $A/\sqrt{n}$,
                  where the amplitude $A$ is a control parameter. The
                  non-zero entries of $\beta$ are chosen at random.
		
		\item Conditional logistic model with $n = 300$
                  observations and $p = 300$ variables.
                  $Y \mid X \sim \operatorname{Bern}\p{ 1/ \p{1 +
                      \exp\cb{-X^\top \beta}}}$. The vector $\beta$
                  has $20$ non-zero entries equal to $A/\sqrt{n}$,
                  where the amplitude $A$ is a control parameter. The
                  non-zero entries of $\beta$ are chosen at random.
		
		\item Conditional non-linear model with $n = 500$
                  observations and $p = 200$ variables. Conditional on
                  $X$,
                  $Y = \beta_0 \sum_{k = 1}^{10} \one \cb{X_{j_k} > 0} \one
                  \cb{X_{\ell_k} > 0} + \epsilon$, where
                  $\epsilon \sim \mathcal{N}(0,1)$ and
                  $\beta_0 = A/\sqrt{n}$. Again, $A$ is a control
                  parameter. The two
                  sets of indices in the regression function are each
                  of cardinality 10 and disjoint. They are chosen
                  uniformly at random.

		\item Conditional non-linear model with a binary
                  response, and $n = 500$ observations and $p = 200$
                  variables.
                  $Y \mid X \sim \operatorname{Bern}\p{ 1/ \p{1 +
                      \exp\cb{-\tilde{X}^\top \beta}}}$, where
                  $\tilde{X}_j = \one \cb{X_{j} > 0} - \one \cb{X_{j}
                    < 0}$. The vector $\beta$ has $20$ non-zero
                  entries equal to $A/\sqrt{n}$, where $A$ is a
                  control parameter. The non-zero entries of $\beta$
                  are chosen at random.
		\end{enumerate}

\item 
The explanatory variables $\bX$ are generated from an HMM model. 
The HMM model considered has 5 hidden states and 3 output states. The transition matrix is 
\[
		\begin{bmatrix}
		0.6 & 0.1 & 0.1 & 0.1 & 0.1\\
		0.1 & 0.6 & 0.1 & 0.1 & 0.1\\
		0.1 & 0.1 & 0.6 & 0.1 & 0.1\\
		0.1 & 0.1 & 0.1 & 0.6 & 0.1\\
		0.1 & 0.1 & 0.1 & 0.1 & 0.6\\
		\end{bmatrix}.
\] The emission probability matrix is
\[
	\begin{bmatrix}
		2/3 & 1/6 & 1/6 \\
		5/12 & 5/12 & 1/6\\
		1/6 & 2/3 & 1/6\\
		1/6 & 5/12 & 5/12\\
		1/6 & 1/6 & 2/3\\
	\end{bmatrix}.
\] The initial probabilities are $[0.2, 0.2, 0.2, 0.2, 0.2]$. 

\begin{enumerate}
		\item Conditional linear model with $n = 300$
                  observations and $p = 300$ variables. We set
                  $Y = X^\top \beta + \epsilon$, where
                  $\epsilon \sim \mathcal{N}(0,1)$. The vector $\beta$
                  has $20$ non-zero entries equal to $A/\sqrt{n}$,
                  where the amplitude $A$ is a control parameter. The
                  non-zero entries of $\beta$ are chosen at random.
		
		\item Conditional logistic model with $n = 300$
                  observations and $p = 300$ variables.
                  $Y \mid X \sim \operatorname{Bern}\p{ 1/ \p{1 +
                      \exp\cb{-(X - 2 \cdot \mathbf{1})^\top \beta}}}$. The vector $\beta$
                  has $20$ non-zero entries equal to $A/\sqrt{n}$,
                  where the amplitude $A$ is a control parameter. The
                  non-zero entries of $\beta$ are chosen at random.
		
		\item Conditional non-linear model with $n = 500$
                  observations and $p = 200$ variables. Conditional on
                  $X$,
                  $Y = \beta_0 \sum_{k = 1}^{10} \one \cb{X_{j_k} > 1.5} \one
                  \cb{X_{\ell_k} > 1.5} + \epsilon$, where
                  $\epsilon \sim \mathcal{N}(0,1)$ and
                  $\beta_0 = A/\sqrt{n}$. Again, $A$ is a control
                  parameter. The two
                  sets of indices in the regression function are each
                  of cardinality 10 and disjoint. They are chosen
                  uniformly at random.

		\item Conditional non-linear model with a binary
                  response, and $n = 500$ observations and $p = 200$
                  variables.
                  $Y \mid X \sim \operatorname{Bern}\p{ 1/ \p{1 +
                      \exp\cb{-\tilde{X}^\top \beta}}}$, where
                  $\tilde{X}_j = \one \cb{X_{j} > 1.5} - 2/3$. The vector $\beta$ has $20$ non-zero
                  entries equal to $A/\sqrt{n}$, where $A$ is a
                  control parameter. The non-zero entries of $\beta$
                  are chosen at random.
		\end{enumerate}

\end{enumerate}
We set the FDR threshold to be $q = 0.1$. We take $c = 0.1$ in Selective SeqStep+. The number of randomizations $B$ in Procedure \ref{alg:exact} and \ref{alg:inexact} are set to be $9$. To compute feature importance, the details of the blackbox algorithm we used are as follows: for lasso/glmnet, the regularization parameter is chosen using cross validation; for gradient boosting, we use the R-package XGBoost. We set the parameters \texttt{eta=0.05, max\_depth=2, nrounds = 100}. 

\subsection{Details of the simulation study in Section \ref{subsection:simulation_large}}
\label{subsection:simulation_details_large}
The samples are generated in the same way as in Section
\ref{subsection:simulation_details_small}, however, with a larger
number of observations $n$, a larger number of variables $p$, and a
larger number of nonnulls $k$. The methods are also the same as in
\ref{subsection:simulation_details_small}. With the same labeling as before, we set the parameters as follows. 
\begin{enumerate}
	\item For the AR(1) model:
		\begin{enumerate}
		\item $n = 1000$, $p = 1000$, and $k = 50$;
		
		\item $n = 1000$, $p = 1000$, and $k = 50$; 
		
		\item $n = 1200$, $p = 500$, and $k = 50$; 
		
		\item $n = 1200$, $p = 500$, and $k = 50$.
		\end{enumerate}
\item 
For the HMM model:

		\begin{enumerate}
		\item $n = 1000$, $p = 1000$, and $k = 50$. 
		
		\item $n = 1000$, $p = 1000$, and $k = 50$. 
		
		\item $n = 1200$, $p = 500$, and $k = 50$. 
		
		\item $n = 1200$, $p = 500$, and $k = 50$.
		\end{enumerate}

\end{enumerate}

\subsection{Details of the simulation study in Section \ref{subsection:simulation_num_nonnull}}
\label{subsection:simulation_details_num_nonnull}
The samples are generated in the same way as in Section
\ref{subsection:simulation_details_small}, however, with a fixed
control parameter $A$, and a varying number of nonnulls
$k \in \cb{5, 10, \dots, 50}$. The methods are also the same as in
\ref{subsection:simulation_details_small} and we only consider the
AR(1) model. With the same labeling as before, we set the parameters
as follows: 
		\begin{enumerate*}[(a),series = tobecont] 
\item $A = 1.5$;	\item $A = 8$; 		\item $A = 15$; 		\item $A = 20$.
		\end{enumerate*}

\subsection{Details of the simulation study in Section \ref{subsection:simulation_Bc}}
\label{subsection:simulation_details_Bc}
The FDR threshold is set to be $q = 0.1$. To compute feature importance statistic, we use lasso for the linear case and glmnet for the logistic case, where the regularization parameter is chosen using cross validation. 
For the ``small dataset" setting, the samples are generated in the same way as in Section \ref{subsection:simulation_details_small}. We only consider the AR(1) model. We fix the control parameter $A$ as follows:
\begin{center}
\begin{tabular}{ |c|c|c|c| } 
 \hline
linear, low SNR & linear, high SNR & logistic, low SNR  & logistic high SNR \\ 
 \hline
$A = 3.5$ & $A = 5$ & $A = 12$ & $A = 30$ \\ 
 \hline
\end{tabular}
\end{center}
For the ``large dataset" setting, the samples are generated in the same way as in Section \ref{subsection:simulation_details_large}. We only consider the AR(1) model. We fix the control parameter $A$ as follows:
\begin{center}
\begin{tabular}{ |c|c|c|c| } 
 \hline
linear, low SNR & linear, high SNR & logistic, low SNR  & logistic high SNR \\ 
 \hline
$A = 3.5$ & $A = 5$ & $A = 10$ & $A = 20$ \\ 
 \hline
\end{tabular}
\end{center}

Figure \ref{fig:Bc_FDR} compares the performance of the sequential CRT (one-shot version) in terms of the FDR with various different values of $c$ and $B$. It appears that the FDR is well controlled for all combinations of $c$ and $B$. 

\begin{figure}
\centering
\begin{subfigure}[b]{0.9\textwidth}
      \caption{Small datasets}
         \centering
         \includegraphics[width = \textwidth]{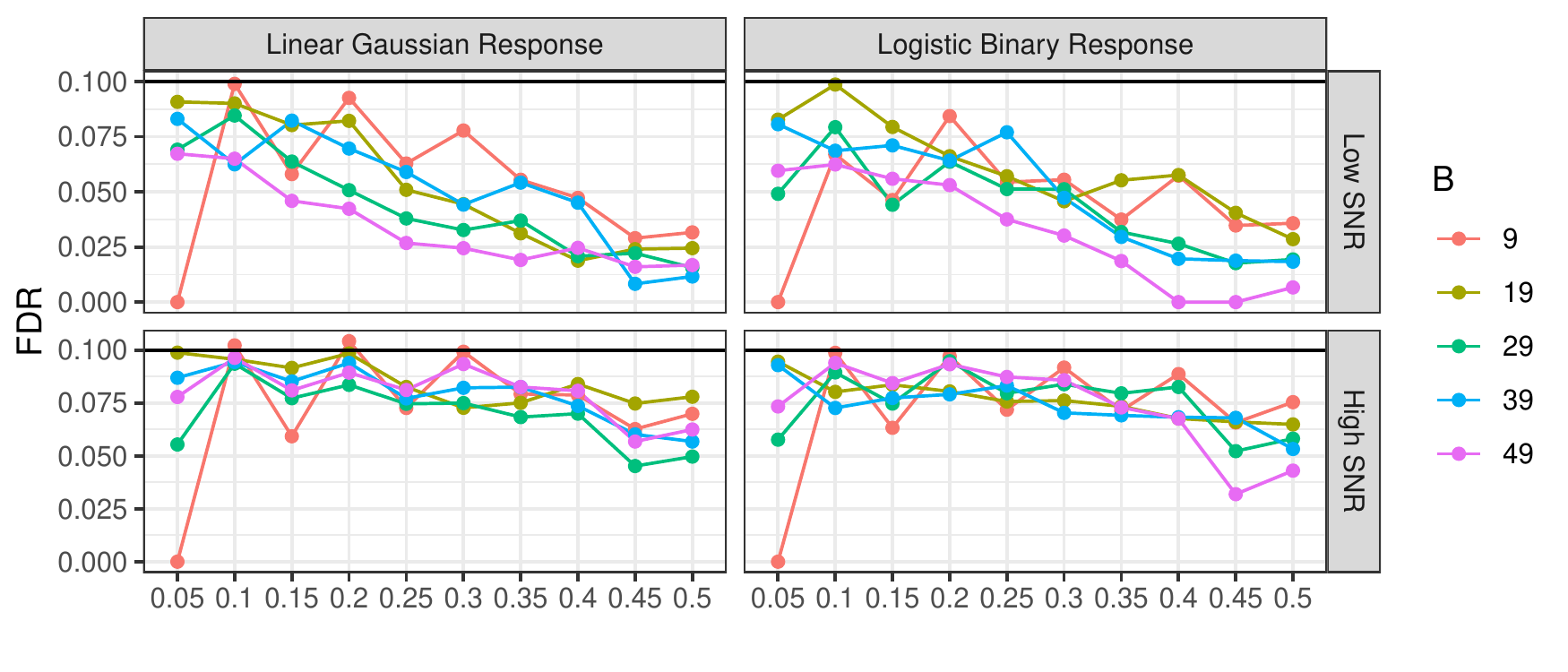}
         \label{fig:Bc_small_FDR}
 \end{subfigure}
\begin{subfigure}[b]{0.9\textwidth}
\noindent\rule{\textwidth}{0.4pt}
  \caption{Large datasets}
\centering
 \includegraphics[width = \textwidth]{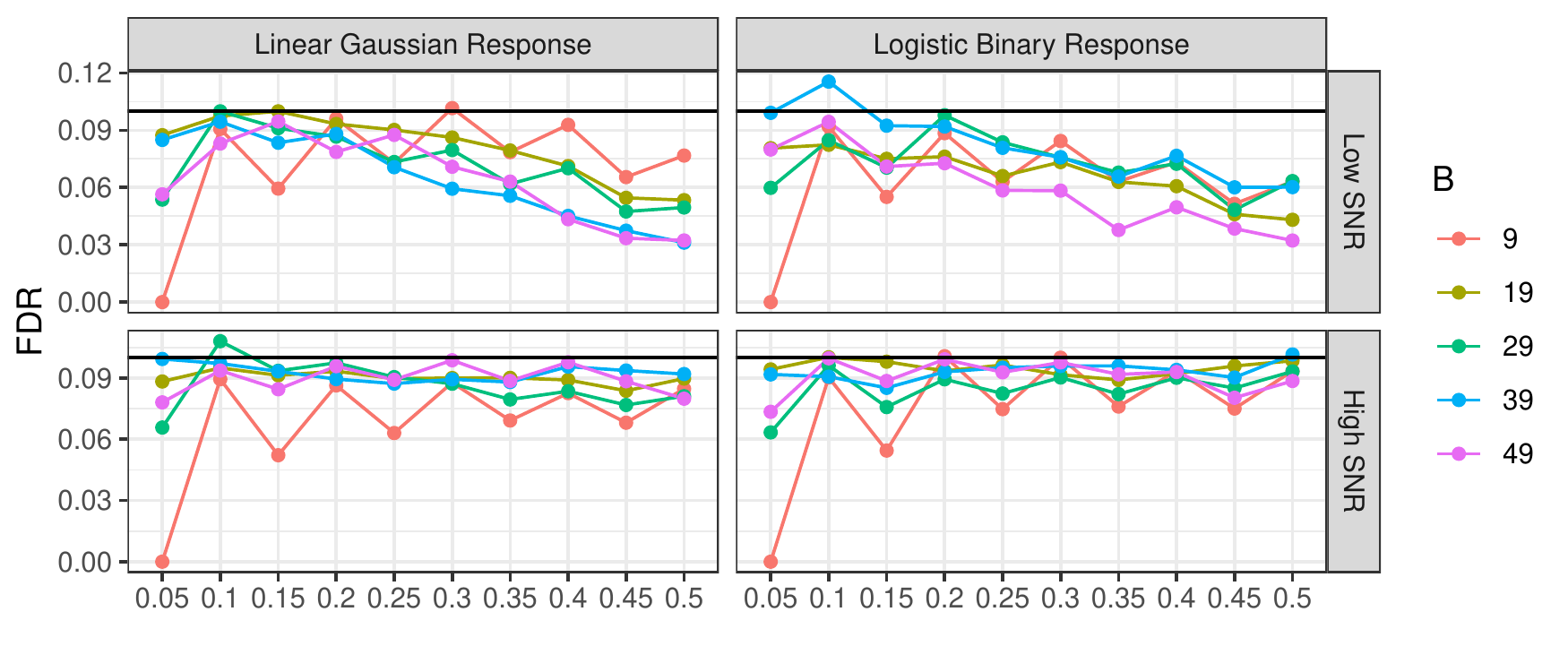}
 \label{fig:Bc_large_FDR}
\end{subfigure}
\caption{FDR of the sequential CRT (one-shot) with different choices of threshold $c$ and number of randomizations $B$. The nominal false discovery
  rate level is 10\%. Results are averaged over 100 independent
  experiments.}
\label{fig:Bc_FDR}
\end{figure}

\subsection{Implementation details of Table \ref{table:time}}
\label{subsection:simulation_details_table}
The explanatory variables $\bX$ are generated from an AR(1) model with
correlation parameter $\rho = 0.5$. Other details are as in Section
\ref{subsection:simulation_details_small}.

\subsection{Implementation details of Figure \ref{fig:hist_max_aj}}
\label{subsection:simulation_details_cond_prob}
We simulate $n = 200$ observations and $p = 120$ variables.  The
explanatory variables $\bX$ are jointly Gaussian with mean $0$ and
variance $\Sigma$. $\Sigma$ is a block diagonal matrix with block size
$3$. The non-zero off-diagonal entries of $\Sigma$ are equal to 0.3. Conditional on $X$, $Y = X^\top \beta + \epsilon$, where $\epsilon \sim \mathcal{N}(0,1)$. The vector $\beta$ has $30$ non-zero entries equal to
$3/\sqrt{n}$. The non-zero entries of $\beta$ are chosen at
random. The $\pvals$ are obtained using Algorithm \ref{alg:crt} where
the test statistics are computed using the lasso. The number of
randomizations $B$ is set to $19$. We set $q = 0.1$ and $c =
0.3$. Each $a_j$ is estimated as an average of 5000 binary
variables. The histogram (Figure \ref{fig:hist_max_aj}) is based on
500 samples.

\subsection{Comparison of the Sequential CRT with other benchmark methods}
\label{section:comparison_benchmark}
 We compare the proposed split version (Procedure \ref{alg:exact}) and symmetric statistic version (Procedure \ref{alg:inexact}) of the sequential CRT with the following methods:
\begin{enumerate}
\item \emph{Knockoffs} \citep{candes2018panning}. Knockoffs are constructed with the Gaussian semi-definite optimization algorithm. We take the
feature importance statistic to be the glmnet coefficient difference. 
\item \emph{Distilled CRT} \citep{liu2020fast}. We consider both $\operatorname{d}_0$CRT and $\operatorname{d}_{\operatorname{I}}$CRT; we refer to Section 2.3 and 2.4 of \citep{liu2020fast} for specific constructions of the dCRT. In the continuous response case, the distillation step is done by a cross-validated lasso; in the binary response case, the distillation step is done by $L_1$-penalized logistic regression. 
\item \emph{HRT} \citep{tansey2018holdout}. We implement Algorithm 1 of \citep{tansey2018holdout} with a data split of 50\%-50\%. In the continuous response case, the algorithm is implemented with a cross-validated lasso; in the binary response case, it is implemented with a cross-validated $L_1$-penalized logistic regression. 
\item \emph{Gaussian Mirror} \citep{xing2019controlling}. We implement the Gaussian Mirror with the \texttt{gm()} function in the GM package (\url{https://github.com/BioAlgs/GM}). 

\end{enumerate}

We run the sequential CRT with one-shot CRT. 
We consider similar settings as in Section \ref{subsection:simulation_large}, and focus on the settings of AR model with a conditional linear model and AR model with a conditional logistic model. 
Other details can be found in Section \ref{subsection:simulation_large} and Appendix \ref{subsection:simulation_details_large}. Figure \ref{fig:large_comparison_ar} compares the performance of the above methods in terms of empirical false discovery rate and power averaged over 100 independent replications. The sequential CRT (symmetric statistics) appears to have the highest power among all methods. In terms of the false discovery rate, all methods except the gaussian mirror control the false discovery rate at the desired level. The gaussian mirror appears to have a huge FDR inflation when the signal to noise ratio is small. Hence, from a practical point of view, it may not the most
  appealing method when we have no prior information on the signal
  strength.

\begin{figure}
\includegraphics[width = \textwidth]{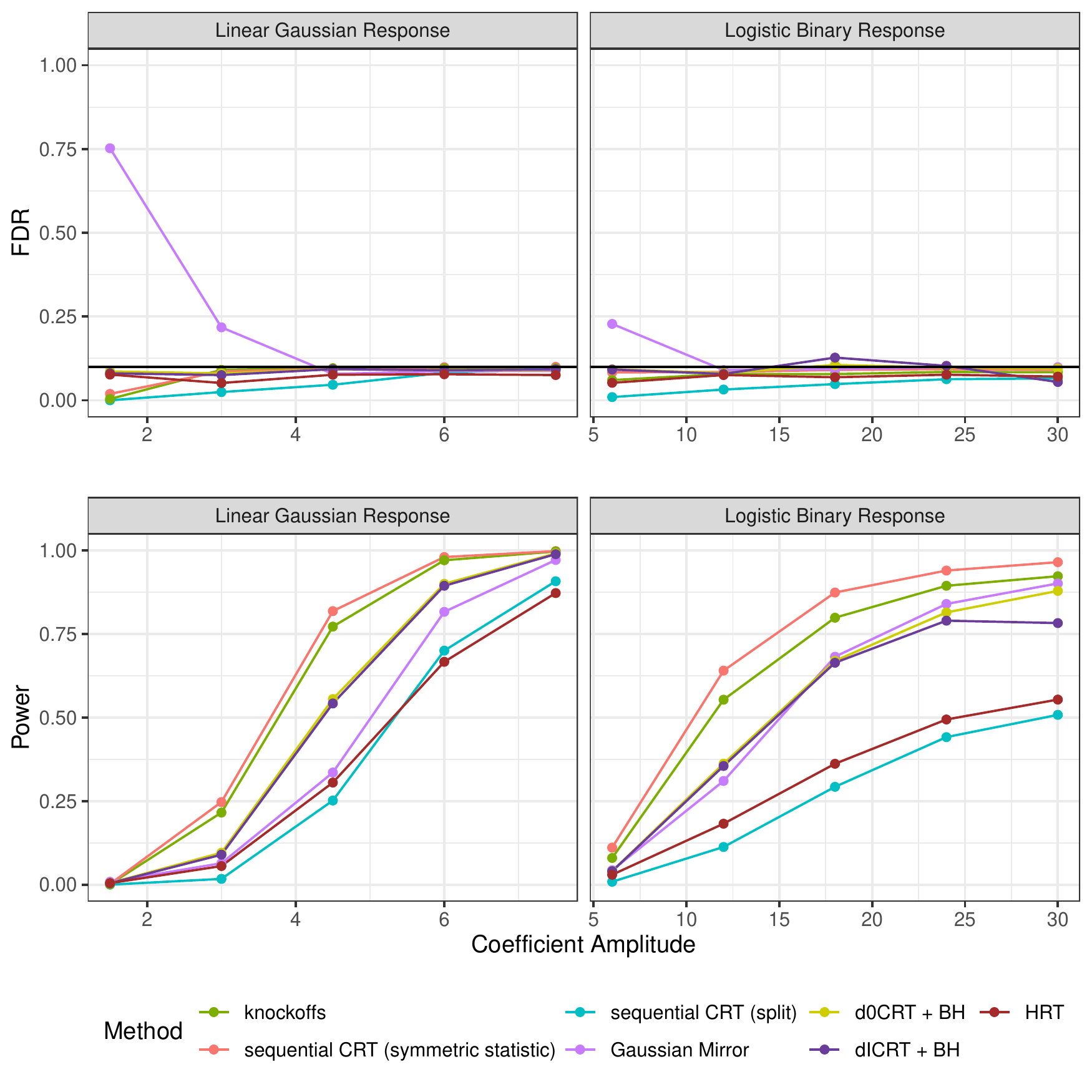}
\caption{Performance of the proposed split version and symmetric statistic version of the sequential CRT compared to other benchmarks on synthetic datasets. The nominal false discovery rate level is 10\%. Results are averaged over 100 independent experiments.}
\label{fig:large_comparison_ar}
\end{figure}

\iffalse
\subsection{Details of the simulation study in Section \ref{section:diagnostic}}
\label{subsection:diagnostic_details}

We take $n = 200$ and $p = 100$. We generate the data from a linear model: $Y = X \beta + \epsilon$. $X$'s are generated from a Gaussian AR model with $\rho = 0.5$. $\epsilon_i$'s are independent standard Gaussian random variables. $\beta$ has $k$ non-zero entries with value $3/\sqrt{n}$. We take $k$ to be 10, 15, 20, 25 and 30. For each value of $k$, we generate 100 independent datasets $(\bX,\bY)_l$, $l = 1, \dots, 100$. We set the parameter $B = 19$ and $c = 0.1$ in \emph{CRT+ SeqStep+}. Lasso (regularization parameter selected by cross validation) is used to compute test statistics for \emph{knockoffs} and \emph{CRT + SeqStep+}. The FDR threshold $q$ is set at 0.1. 

 Each dot in Figure \ref{fig:FDR_tilde} correspond to the value of $\widetilde{\FDR}$ and $\EE{\FDP|\bX_{\operatorname{nonnulls}}, \bY}$ of one dataset $(\bX,\bY)_l$. Each data point in Figure \ref{fig:hist_diag} correspond to the value of $\sum_{j \in \Ho}  \EE{\FDP_j | \bX_{-j}, \bY}$ and $\EE{\FDP | \bX_{\operatorname{nonnulls}}, \bY}$ of one dataset $(\bX,\bY)_l$. 
\fi

\section{Proofs}
\subsection{Proof of Theorem \ref{theo:exchangeable}}
\label{subsection:proof_exch}
For the sake of notation, we write $m = p$, i.e. we let $m$ be the number of hypotheses/variables, to distinguish from the ``p" in $\pvals$: $p_1, p_2, $ etc. 

\subsubsection{Proof of upper bound \eqref{eqn:exch_bound1}}
\label{subsection:proof_exch1}
To bound the FDR, we modify arguments from the proof of Lemma 1 and
Theorem 3 in Supplement of \citep{barber2015controlling}. We include a
part of the statement in \citep[Lemma 1]{barber2015controlling} for reference:

\begin{quote}
``For $k=m, m-1, \ldots, 1,0,$ put $V^{+}(k)=$ $\#\left\{ j \in \Ho : 1 \leq j \leq k, p_{j} \leq c\right\}$ and $V^{-}(k)=\# \{  j \in \Ho : 1 \leq j \leq k, p_{j}>c \}$ with the convention that $V^{\pm}(0)=0 .$ Let $\mathcal{F}_{k}$ be the filtration defined by knowing all the nonnull $\pvals$, as well as $V^{\pm}\left(k^{\prime}\right)$ for all $k^{\prime} \geq k .$ Then the process
$$
M(k)=\frac{V^{+}(k)}{1+V^{-}(k)}
$$
is a super-martingale running backward in time with respect to $\mathcal{F}_{k}$."
 \end{quote}

The conclusion above assumes that the null $p$ -values are i.i.d.,
satisfy $p_{j} \geq$ Unif[0,1], and are independent from the nonnulls.
Here, we are no longer in the i.i.d.~setting of \citep[Lemma
1]{barber2015controlling}. Yet, the same proof goes through with
exchangeability. This mean that the above result still holds under the
conditions that the null $\pvals$ are exchangeable given nonnull
$\pvals$ and that marginally, null $\pvals$ satisfy $p_{j} \geq$
Unif[0,1].

As in \citep{barber2015controlling}, the $\hat{k}$ defined in \eqref{eqn:seqstep}
is a stopping time with respect to the backward filtration
$\left\{\mathcal{F}_{k}\right\}$ since
$\{\hat{k} \geq k\} \in \mathcal{F}_{k}$. By the optional stopping
time theorem for super-martingales,
\[
\EE{ M(\hat{k}) \big| \mathcal{F}_m}
 \leq M(m)
=\frac{\#\left\{ j \in \Ho: p_{j} \leq c\right\}}{1+\#\left\{j \in \Ho: p_{j}>c\right\}}. 
\]
As in \citep{barber2015controlling}, we write $V=\#\cb{j \in \Ho,  j \leq \hat{k}: p_{j} \leq c}$ and $R=\#\left\{j \leq \hat{k}: p_{j} \leq c\right\}$. 
\begin{align*}
\EE{\FDP | \mathcal{F}_m} &= \mathbb{E}\left[\frac{V}{R \vee 1} \Big| \mathcal{F}_m \right]
=\mathbb{E}\left[\frac{V}{R \vee 1} \cdot \one\cb{\hat{k}>0} \Big| \mathcal{F}_m \right]\\
&=\mathbb{E}\left[\frac{\#\left\{j \in \Ho, j  \leq \hat{k}: p_{j} \leq c\right\}}{1+\#\left\{j \in \Ho, j  \leq \hat{k}: p_{j}>c\right\}} \cdot\left(\frac{1+\#\left\{j \in \Ho, j  \leq \hat{k}: p_{j}>c\right\}}{\#\left\{j \leq \hat{k}: p_{j} \leq c\right\} \vee 1} \cdot \one\cb{\hat{k}>0}\right) \Bigg| \mathcal{F}_m \right] \\
&\leq \p{ \mathbb{E}\left[\frac{\#\left\{j \in \Ho, j  \leq \hat{k}: p_{j} \leq c\right\}}{1+\#\left\{j \in \Ho, j  \leq \hat{k}: p_{j}>c\right\}} \Bigg| \mathcal{F}_m \right] \cdot \frac{1-c}{c} \cdot q} \wedge 1\\
& \leq \p{\frac{\#\left\{ j \in \Ho: p_{j} \leq c\right\}}{1+\#\left\{j \in \Ho: p_{j}>c\right\}}  \cdot \frac{1-c}{c} \cdot q }\wedge 1\\
&\leq \p{\frac{\#\left\{ j \in \Ho: p_{j} \leq c\right\}}{1+\#\left\{j \in \Ho: p_{j}>c\right\}} \wedge \frac{c}{(1-c)q} } \frac{1-c}{c}q.
\end{align*}
This implies that
\begin{align*}
\FDR
&= \EE{\EE{\FDP | \mathcal{F}_m}}
\leq \frac{1-c}{c}q \EE{\p{\frac{\#\left\{ j \in \Ho: p_{j} \leq c\right\}}{1+\#\left\{j \in \Ho: p_{j}>c\right\}} \wedge \frac{c}{(1-c)q} }}\\
&=\frac{(1-c)q}{c}  \sum_{k = 0}^m \pi_k \p{ \frac{k}{1+m-k}  \wedge \frac{c}{(1-c)q}},
\end{align*}
where $\pi_k = \PP{\#\cb{i \in \Ho: p_i \leq c} = k}$. 

Thus, bounding the FDR reduces to an optimization problem:
\begin{alignat}{1}
\label{eqn:opt_no_var}
\text{maximize}_{\pi_k}  \quad & \sum_{k = 0}^m \pi_k \p{ \frac{k}{1+m-k}  \wedge \frac{c}{(1-c)q}}\\
\text{subject to} \quad & \sum \pi_k = 1, \pi_k \geq 0, \sum k \pi_k \leq c m. \nonumber
\end{alignat}
More rigorously, let $\Vopt$ be the optimal value of
\eqref{eqn:opt_no_var}. Then our previous analysis implies that
$\FDR \leq \frac{(1-c) q}{c}\Vopt$. Note that with
$\pi_k = \PP{\#\cb{i \in \Ho: p_i \leq c} = k}$, we have
$\sum k \pi_k \leq c$ since 
$\sum k \pi_k = \EE{\#\cb{i \in \Ho: p_i \leq c}} = \sum_{i \in \Ho}
\PP{p_i \leq c} \leq c m$.

It remains to solve \eqref{eqn:opt_no_var}. The optimal value of \eqref{eqn:opt_no_var} is indeed the same as that of \eqref{eqn:opt_no_var2}:
\begin{alignat}{1}
\label{eqn:opt_no_var2}
\text{maximize}_{\pi_k}  \quad & \sum_{k = 0}^{m'} \pi_k \p{ \frac{k}{1+m-k}} \\
\text{subject to} \quad & \sum_{k = 0}^{m'} \pi_k = 1, \pi_k \geq 0, \sum_{k = 0}^{m'} k \pi_k \leq c m, \nonumber
\end{alignat}
where $m' = \lfloor \frac{c m}{q + c(1-q)} \rfloor$. This is because
for $k > m'$, $\frac{k}{1 + m - k} > \frac{c}{(1-c)q}$. Hence for any
optimal solution of \eqref{eqn:opt_no_var} such that $\pi_k > 0$ for
some $k > m'$, we can move the mass to $m'$, i.e., define
$\pi'_{m'} = \sum_{k \geq m'} \pi_k$. Then the new $\pi'$ still
satisfies the constraints in \eqref{eqn:opt_no_var} and achieves the same objective
value.

To solve \eqref{eqn:opt_no_var2}, note that since $\frac{k}{1 + m - k}$
is convex in $k$, and the constraints are linear in $k$, the optimal value is achieved when the probability $\pi$ has its mass on the boundary. Formally, 
the optimal value is achieved when $\pi_k = 0$ for
all $k$ except for $0$ and $m'$. To see this, assume that there is a $k_0 \neq 0, m'$ such that $\pi_{k_0} \neq 0$. We will see that if we move the mass at $k_0$ to $0$ and $m'$, then the constraints are still satisfied but the objective function will have a larger value. More precisely, if we define $\tilde{\pi}$ by taking $\tilde{\pi}_0 = \pi_0 + \pi_{k_0}(m' - k_0)/m'$, $\tilde{\pi}_{m'} = \pi_{m'} + \pi_{k_0} k_0/m'$, and $\tilde{\pi}_{k_0} = 0$, then $\sum_{k = 0}^{m'} \tilde{\pi}_k = \sum_{k = 0}^{m'} \pi_k = 1$, and $\sum_{k = 0}^{m'} k \tilde{\pi}_k = \sum_{k = 0}^{m'} k \pi_k \leq c m$. The objective function $\sum_{k = 0}^{m'} \tilde{\pi}_k \p{ \frac{k}{1+m-k}} > \sum_{k = 0}^{m'} \pi_k \p{ \frac{k}{1+m-k}}$, because $\frac{k}{1 + m - k}$
is strictly convex in $k$. 

Therefore the optimal value of \eqref{eqn:opt_no_var2} is achieved when $\pi_k = 0$ for all $k$ except for $0$ and $m'$. 
Thus the optimal value
$\Vopt = \frac{\pi_{m'} m'}{1 + m - m'}$, where $m' \pi_{m'} = c
m$. As
$m' = \lfloor \frac{c m}{q + c(1-q)} \rfloor \leq \frac{c m}{q +
  c(1-q)}$, we have
\[ \Vopt = \frac{\pi_{m'} m'}{1 + m - m'} \leq \frac{c m}{1 + m -  \frac{c m}{q + c(1-q)}} \leq \frac{c(q + c(1-q))}{q(1-c)}. \]
Hence
\[\FDR \leq \frac{(1-c)q}{c} \frac{c(q + c(1-q))}{q(1-c)} = q + c(1-q).\]

\subsubsection{Proof of upper bound \eqref{eqn:exch_bound2}}
\label{subsection:proof_exch2}
We work here with the additional constraint that for pairs of nulls
$(i,j)$, $\Corr{\one \cb{p_i\leq c}, \cb{p_j\leq c}} \leq \rho$. Put
$c_0 = \PP{p_i\leq c}$ and $m_0 = \#\cb{i \in \Ho}$. Then
$\Cov{\one \cb{p_i\leq c}, \cb{p_j\leq c}} \leq \rho c_0 (1-c_0)$. Also,
\begin{align*} 
\Var{\#\cb{i \in \Ho: p_i \leq c} } & = m_0 \Var{ \one \cb{p_i\leq c}}
+ m_0(m_0 - 1) \operatorname{Cov} [ \one \cb{p_i\leq c},
\one \cb{p_j\leq c}] \\ & \leq m_0(1 + \rho(m_0 - 1))c_0(1-c_0).
\end{align*}
Alternatively, we can write the variance as a function of
the $\pi_k$'s:
$$\Var{\#\cb{i \in \Ho: p_i \leq c} } = \sum_{k = 0}^m (k - c_0 m_0)^2
\pi_k.$$ Thus the constraint that
$\Corr{\one \cb{p_i\leq c}, \cb{p_j\leq c}} \leq \rho$ can be
rewritten as
$$\sum_{k = 0}^m (k - c_0 m_0)^2 \pi_k \leq m_0(1 + \rho(m_0 -
1))c_0(1-c_0).$$

Following the analysis in Section \ref{subsection:proof_exch1}, bounding the FDR reduces to solving an
optimization problem with decision variables $\{\pi_k\}$, $m_0$ and $c_0$: 
\begin{alignat}{2}
\label{eqn:opt1}
&\text{maximize} \quad && \sum_{k = 0}^m \pi_k \p{ \frac{k}{1+m-k}  \wedge \frac{c}{(1-c)q}}\\
&\text{subject to} \quad && \sum_{k = 0}^m \pi_k = 1 , \pi_k \geq 0,  \,\, \sum_{k = 0}^m \pi_k = c_0 m_0, \,\, c_0 \leq c, \,\,  m_0 \leq m,\nonumber\\
&  && \sum_{k = 0}^m (k - c_0)^2 \pi_k \leq m_0(1 + \rho(m_0 - 1))c_0(1-c_0). \nonumber
\end{alignat}
Specifically, $\FDR \leq \frac{(1-c)q}{c} \Vopt\eqref{eqn:opt1}$,
where $\Vopt\eqref{eqn:opt1}$ is the optimal value of the program
\eqref{eqn:opt1}. The optimization problem
is equivalent (in the sense that
$\Vopt\eqref{eqn:opt1} = \Vopt\eqref{eqn:opt2}$) to:
\begin{alignat}{2}
\label{eqn:opt2}
&\text{maximize}  \quad && \sum_{k = 0}^{m'} \pi_k \p{\frac{k}{1+m-k} } \\
&\text{subject to} \quad && \sum_{k = 0}^{m'} \pi_k = 1 , \,\, \pi_k \geq 0,  \,\, \sum_{k = 0}^{m'} \pi_k = c_0 m_0, \,\, c_0 \leq c, m_0 \leq m,\nonumber\\
&  && \sum_{k = 0}^{m'} (k - c_0)^2 \pi_k \leq m_0(1 + \rho(m_0 - 1))c_0(1-c_0), \nonumber
\end{alignat}
where $m' = \lfloor  \frac{c m}{q + c(1-q)} \rfloor$. 
For now, fix $\sigma > 0$ and $c_1 \leq c$ and focus on the case where $\sum_{k = 0}^{m'} (k - c_0)^2 \pi_k = m^2 \sigma^2$ and $\sum_{k = 0}^m \pi_k = c_1 m = c_0 m_0$. We will optimize over the choice of $\sigma$ and $c_1$ later. That is, consider
\begin{alignat}{2}
\label{eqn:opt3}
&\text{maximize} \quad && \sum_{k = 0}^{m'} \pi_k \p{\frac{k}{1+m-k} }\\
&\text{subject to} \quad && \sum_{k = 0}^{m'} \pi_k = 1 , \pi_k \geq 0,  \sum_{k = 0}^{m'} \pi_k = c_1 m,\nonumber\\
&  && \sum_{k = 0}^{m'} (k - c_1)^2 \pi_k = \sigma^2 m^2, \nonumber
\end{alignat}
with decision variables $\{\pi_k\}$. $\Vopt\eqref{eqn:opt3}$ is a function of $\sigma$, $c_0$ and $m_0$ and we have: 
\begin{equation}
\label{eqn:opt_value_inequality}
 \Vopt\eqref{eqn:opt2} = \max\cb{\Vopt\eqref{eqn:opt3} : \sigma^2 \leq m_0(1 + \rho(m_0 - 1))c_0(1-c_0)/m^2, c_0 \leq c, m_0 \leq m}.
\end{equation}
As for the optimization problem \eqref{eqn:opt3}, this is an LP with
three equality constraints. By Carath\'eodory's theorem, any optimal
solution must have at most three non-zero $\pi_k$'s. Hence, the problem
becomes ($\Vopt\eqref{eqn:opt3} = \Vopt\eqref{eqn:opt4}$)
\begin{alignat}{2}
\label{eqn:opt4}
&\text{maximize} \quad && \pi_1\frac{k_1}{ 1 + m - k_1} +\pi_2\frac{k_2}{ 1 + m - k_2} + \pi_3\frac{k_3}{ 1 + m - k_3} \\
&\text{subject to} \quad && \pi_1 + \pi_2 + \pi_3 = 1, \, (\pi_1, \pi_2, \pi_3) \ge 0,  \nonumber\\
&&&\pi_1 k_1 + \pi_2 k_2 + \pi_3 k_3 = c_1 m, \nonumber\\
&  && \pi_1 (k_1 - c_1)^2 + \pi_2 (k_2 - c_1)^2 + \pi_3 (k_3 - c_1)^2  = \sigma^2 m^2,\nonumber\\
&  && (k_1, k_2, k_3) \leq  \lfloor  \frac{c m}{q + c(1-q)} \rfloor, \nonumber
\end{alignat}
where the last inequality says that all components of the left-hand
side are upper bounded by the right-hand side.  Set $x_i = k_i/m$,
$1 \le i \le 3$, and relax the constraints by allowing  $x_1$ to
take any real value in the interval. We also increase the objective
for simplification and consider
\begin{alignat}{2}
\label{eqn:opt5}
&\text{maximize} \quad &&\pi_1 \frac{x_1}{1-  x_1} + \pi_2 \frac{x_2}{1 - x_2} + \pi_3 \frac{x_3}{1 - x_3} \quad \\
& \text{subject to} \quad &&\pi_1 + \pi_2 + \pi_3 = 1, \nonumber\\
& && \pi_1 x_1 + \pi_2 x_2 + \pi_3 x_3 = c_1, \nonumber\\
 & && \pi_1 (x_1-c_1)^2 + \pi_2 (x_2-c_1)^2 + \pi_3 (x_3-c_1)^2 = \sigma^2, \nonumber\\
 & &&  (\pi_1, \pi_2, \pi_3) \geq 0, \,\, 0 \leq  (x_1, x_2, x_3) \leq  \frac{c}{q + c(1-q)},  \nonumber
\end{alignat}
so that $\Vopt\eqref{eqn:opt4} \leq  \Vopt\eqref{eqn:opt5}$. By lemma \ref{lemm:opt},  
\begin{align*}
 \Vopt\eqref{eqn:opt5} \leq \frac{\alpha}{1 - \alpha} - \frac{(\alpha - c_1)^2}{(1-\alpha)(\sigma^2 + (1-c_1)(\alpha - c_1))},
\end{align*}
where $\alpha = \frac{c}{q + c(1-q)}$. The upper bound is clearly an increasing function of $\sigma^2$ and $c_1$. Hence, for any $\sigma^2 \leq m_0(1 + \rho(m_0 - 1))c_0(1-c_0)/m^2 \leq (1 + \rho(m - 1))/m c_0(1-c_0)$ and any $c_1 = c_0 m_0/m \leq c_0$, 
\begin{align*}
 \Vopt\eqref{eqn:opt5} &\leq \frac{\alpha}{1 - \alpha} - \frac{(\alpha - c_1)^2}{(1-\alpha)(\sigma^2 + (1-c_1)(\alpha - c_1))}\\
& \leq \frac{\alpha}{1- \alpha} - \frac{(\alpha - c_0)^2}{(1-\alpha)(\tilde{\rho}c_0(1-c_0) + (1-c_0)(\alpha - c_0))},
\end{align*}
where $\tilde{\rho} = ((m-1)\rho + 1)/m$. 
As
\[ \frac{d}{d c_0}  \frac{(\alpha - c_0)^2}{(\tilde{\rho}c_0(1-c_0) + (1-c_0)(\alpha - c_0))} = -\frac{ (\alpha-c_0)(1-\alpha) + \tilde{\rho}(\alpha + c_0 - 2 \alpha c_0))}{(c_0-1)^{2}(\alpha+c_0 \rho-c_0)^{2}}<0, \]
the bound $\frac{\alpha}{1- \alpha} - \frac{(\alpha - c_0)^2}{(1-\alpha)(\tilde{\rho}c_0(1-c_0) + (1-c_0)(\alpha - c_0))}$ is an increasing function of $c_0$. Hence for any $c_0 \leq c$,
\begin{align*}
 \Vopt\eqref{eqn:opt5} & \leq \frac{\alpha}{1- \alpha} - \frac{(\alpha - c_0)^2}{(1-\alpha)(\tilde{\rho}c_0(1-c_0) + (1-c_0)(\alpha - c_0))}\\
& \leq \frac{\alpha}{1- \alpha} - \frac{(\alpha - c)^2}{(1-\alpha)(\tilde{\rho}c(1-c) + (1-c)(\alpha - c))}. 
\end{align*}
Together with \eqref{eqn:opt_value_inequality}, this implies that
\begin{align*}
 \Vopt\eqref{eqn:opt1}  
& \leq \frac{\alpha}{1- \alpha} - \frac{(\alpha - c)^2}{(1-\alpha)(\tilde{\rho}c(1-c) + (1-c)(\alpha - c))},
\end{align*}
where $\alpha = \frac{c}{q + c(1-q)}$, $\tilde{\rho} = ((m-1)\rho + 1)/m$. 
Thus 
\begin{align*}
 \FDR &\leq \frac{(1-c)q}{c}\Vopt\eqref{eqn:opt1}  
\leq  \frac{(1-c)q}{c}\p{ \frac{\alpha}{1- \alpha} - \frac{(\alpha - c)^2}{(1-\alpha)(\tilde{\rho}c(1-c) + (1-c)(\alpha - c))} }\\
& = q + \frac{\tilde{\delta}}{1 + \beta \tilde{\delta}} \sqb{\frac{c}{1-c} - \frac{c - c\tilde{\delta}}{1 - (c-c\tilde{\delta})}q},
\end{align*}
where $\tilde{\rho} = \frac{(m-1)\rho + 1}{m}$, $\beta= \frac{c + (1-c)q}{(1-c)(1-q)}$, and $\tilde{\delta} = \tilde{\rho} \frac{c(1-q) + q}{c(1-q)}$. Upon setting $\delta = \rho  \frac{c(1-q) + q}{c(1-q)}$, $\tilde{\delta} \leq \delta$, whence,  
\[ \FDR \leq  q + \frac{\tilde{\delta}}{1 + \beta \tilde{\delta}} \sqb{\frac{c}{1-c} - \frac{c - c\tilde{\delta}}{1 - (c-c\tilde{\delta})}q} \leq q + \frac{\delta}{1 + \beta \delta} \sqb{\frac{c}{1-c} - \frac{c - c\delta}{1 - (c-c\delta)}q}. \]

\begin{lemm}
\label{lemm:opt}
Fix $1 > \alpha > c > 0$.  The problem
\begin{alignat}{2}
\label{eqn:opt_3_terms}
&\text{\em maximize} \quad &&\pi_1 \frac{x_1}{1-  x_1} + \pi_2 \frac{x_2}{1 - x_2} + \pi_3 \frac{x_3}{1 - x_3} \quad \\
& \text{\em subject to} \quad &&\pi_1 + \pi_2 + \pi_3 = 1, \nonumber\\
& && \pi_1 x_1 + \pi_2 x_2 + \pi_3 x_3 = c, \nonumber\\
 & && \pi_1 (x_1-c)^2 + \pi_2 (x_2-c)^2 + \pi_3 (x_3-c)^2 = \sigma^2, \nonumber\\
 & &&  (\pi_1, \pi_2, \pi_3) \geq 0, \,\, (x_1, x_2, x_3) \leq  \alpha, \nonumber
\end{alignat}
has optimal value 
\[ \Vopt = \frac{\alpha}{1 - \alpha} - \frac{(\alpha - c)^2}{(1-\alpha)(\sigma^2 + (1-c)(\alpha - c))}. \]

\end{lemm}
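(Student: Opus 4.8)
The plan is to show the claimed value is both an upper bound and is attained. First I would observe that the objective $\sum_i \pi_i x_i/(1-x_i)$ can be rewritten in a way that exposes its dependence on the moments of the discrete distribution that places mass $\pi_i$ at $x_i$. Writing $f(x) = x/(1-x) = -1 + 1/(1-x)$, and using $g(x) = 1/(1-x)$, the objective equals $-1 + \mathbb{E}[g(X)]$ where $X$ takes value $x_i$ with probability $\pi_i$. The idea is that $g$ is convex, so $\mathbb{E}[g(X)]$ is controlled by pushing mass toward the extreme allowed point $x = \alpha$, but the two moment constraints $\mathbb{E}[X] = c$ and $\mathrm{Var}(X) = \mathbb{E}[(X-c)^2] = \sigma^2$ limit how much mass can sit at $\alpha$. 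The key algebraic trick I expect to use: for the specific function $g(x) = 1/(1-x)$, there is an exact tangent-line-type identity. One can check that for any $x < 1$,
\[
\frac{x}{1-x} = \frac{\alpha}{1-\alpha} - \frac{(\alpha - x)}{1-\alpha} + \frac{(\alpha-x)(\text{something})}{\dots},
\]
so more precisely I would look for constants $A, B$ (depending on $\alpha, c, \sigma$) such that
\[
\frac{x}{1-x} \le A + B(x - c) - \lambda (x-c)^2 \quad \text{is false in general},
\]
and instead exploit that $1/(1-x)$ is a Möbius function: the correct move is to find $A,B$ with $\frac{x}{1-x} \le A + B(x-c) - \frac{(x-c)^2}{(1-\alpha)(\sigma^2 + (1-c)(\alpha-c))} \cdot (\text{correction})$. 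Because this is getting delicate, the cleaner route is: parametrize by the point $\alpha$ and one free point, note any optimal solution has at most $3$ support points (already given), and argue at the optimum at least one point equals $\alpha$ (else perturb to increase the objective while holding the two moments fixed — a convexity/variation argument), which reduces the effective degrees of freedom so the remaining optimization over a $2$-point distribution on $[0,\alpha]$ with fixed mean and variance can be solved in closed form.

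Concretely, the steps in order: (i) Reduce to distributions supported on at most three points (given) and then argue the support must include the endpoint $\alpha$; with one atom at $\alpha$ of mass $\pi_\alpha$ and the remaining mass on points in $[0,\alpha)$, the mean and variance constraints become linear relations among $\pi_\alpha$, the conditional mean $\mu$, and conditional second moment of the rest. (ii) For the part of the distribution on $[0,\alpha)$, by convexity of $g$ the objective is maximized (for its fixed mean and second moment) by a two-point distribution, one of whose points can also be taken to be $\alpha$ in the limit or an interior point — iterate the argument until everything collapses to a two-point distribution $\{0\text{-ish}, \alpha\}$ or a two-point distribution determined by the mean/variance. (iii) Solve the resulting scalar optimization: with support $\{x_0, \alpha\}$, mass $(1-\pi_\alpha, \pi_\alpha)$, the constraints $\pi_\alpha \alpha + (1-\pi_\alpha)x_0 = c$ and $\pi_\alpha(\alpha-c)^2 + (1-\pi_\alpha)(x_0-c)^2 = \sigma^2$ pin down $x_0$ and $\pi_\alpha$ explicitly (two equations, two unknowns, giving $\pi_\alpha (\alpha - c)(\alpha - x_0) = \sigma^2$ after using the mean relation $(1-\pi_\alpha)(c - x_0) = \pi_\alpha(\alpha - c)$), and substituting into $\pi_\alpha \frac{\alpha}{1-\alpha} + (1-\pi_\alpha)\frac{x_0}{1-x_0}$ and simplifying yields exactly $\frac{\alpha}{1-\alpha} - \frac{(\alpha-c)^2}{(1-\alpha)(\sigma^2 + (1-c)(\alpha-c))}$. (iv) Finally verify this is an upper bound for all feasible points — this follows because every feasible distribution is dominated, in the sense of the objective, by one of this two-point form with the same mean and variance (convex order / the fact that $g$ is convex means that among distributions with given mean and variance, and support in $(-\infty,\alpha]$, the objective is maximized by the one putting as much mass as possible at $\alpha$, which is precisely the two-point solution).

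The main obstacle I anticipate is step (ii)–(iii): carefully justifying that the optimum is attained at a two-point distribution of the specific form $\{x_0,\alpha\}$ rather than needing all three support points, and then the bookkeeping to show the substitution collapses to the stated closed form. The convexity of $g(x)=1/(1-x)$ is the engine — it gives both the ``push mass to $\alpha$'' intuition and the reduction in the number of atoms — but turning ``push mass to $\alpha$'' into a rigorous statement with \emph{two} active moment constraints (which is what obstructs putting all mass at $\alpha$) requires a genuine variational argument: given any feasible $(\pi, x)$, if no atom sits at $\alpha$ one shifts an interior atom upward and rebalances a second atom to preserve mean and variance, strictly increasing the objective by strict convexity; iterating lands at the boundary. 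Once that structural fact is in hand, the remaining computation is routine algebra, so I would relegate it to a short verification. I would also double-check the edge cases $\sigma^2 = 0$ (forces a point mass at $c$, objective $c/(1-c)$, and the formula should reduce to this — indeed $\frac{\alpha}{1-\alpha} - \frac{(\alpha-c)^2}{(1-\alpha)(1-c)(\alpha-c)} = \frac{\alpha}{1-\alpha} - \frac{\alpha - c}{(1-\alpha)(1-c)} = \frac{c}{1-c}$, which confirms consistency) and the case where the variance constraint is slack, to make sure the Lagrangian/support argument is set up with the right active constraints.
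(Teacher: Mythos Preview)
Your proposal is essentially correct and reaches the same endpoint as the paper: reduce to a two-point distribution $\{x_0,\alpha\}$, solve for $x_0$ and $\pi_\alpha$ from the mean and variance constraints, and substitute to get the closed form. Your step (iii) computation and the $\sigma^2=0$ sanity check match the paper exactly.

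The one substantive difference is in how the reduction to two points with an atom at $\alpha$ is justified. The paper does this via a KKT/complementary-slackness argument: if all $\pi_i^\star>0$ and all $x_i^\star<\alpha$, stationarity in $x_i$ gives $\frac{1}{(1-x_i^\star)^2}+\lambda+\gamma(2x_i^\star-c)=0$, and strict convexity of $x\mapsto 1/(1-x)^2$ means this equation has at most two roots, so there are at most two distinct $x_i^\star$; a short case analysis (with a change of variables making the remaining constraints linear in $x$ and the objective convex) then forces one of them to be $\alpha$. Your alternative --- ``shift an interior atom upward and rebalance another to preserve mean and variance, strictly increasing the objective by strict convexity, and iterate'' --- is the right intuition but, as written, is not yet a proof: with \emph{two} moment constraints active, a two-atom perturbation that preserves both moments does not automatically move the objective in the positive direction (strict convexity alone does not decide the sign), and ``iterate until you hit the boundary'' needs a compactness/limit argument you have not supplied. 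The KKT route is cleaner here precisely because it characterizes all stationary points at once and avoids this perturbation bookkeeping. If you want to keep your variational framing, the quickest fix is to replace the ad hoc shift by the same first-order condition: at an interior maximum the gradient in the $x_i$'s must lie in the span of the constraint gradients, which gives exactly the paper's equation.
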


\begin{proof}
  Let $\pi^{\star}$ and $x^{\star}$ be
  an optimal solution with objective value $\Vopt$. If $\pi_i^{\star} > 0$
  and $x_i^{\star} < \alpha$ for all $i = 1,2,3$, then by complementary slackness, there exist
  $\lambda$ and $\gamma$ such that
\[ \frac{\pi^{\star}_i}{(1-x_i^{\star})^2} + \lambda \pi_i^{\star} + \gamma \pi_i^{\star} (2x_i^{\star} - c) = 0 \]
holds for $i = 1,2,3$. This further implies that
\[ \frac{1}{(1-x_i^{\star})^2} + \lambda + \gamma (2x_i^{\star} - c) = 0. \] 
The left-hand side is strictly convex in $x$, thus this equation has at most two distinct solutions for $x$. Thus any optimal
solution of \eqref{eqn:opt_3_terms} either has some $x_i$'s taking on
the value $\alpha$ or two $x_i$'s are the same. We analyze each case
below.

\paragraph{Optimal solution has some entries taking on the value $\alpha$.}
Without loss of generality, assume that the optimal solution has
$x_3 = \alpha$. Then
\eqref{eqn:opt_3_terms} is equivalent to 
\begin{alignat*}{2}
&\text{maximize} \quad &&\pi_1 \frac{x_1}{1-  x_1} + \pi_2 \frac{x_2}{1 - x_2} + \pi_3 \frac{\alpha}{1 - \alpha} \quad \\
& \text{subject to} \quad &&\pi_1 + \pi_2 + \pi_3 = 1,\\
& && \pi_1 x_1 + \pi_2 x_2 + \pi_3 \alpha = c, \nonumber\\
 & && \pi_1 (x_1-c)^2 + \pi_2 (x_2-c)^2 + \pi_3 (\alpha-c)^2 = \sigma^2, \\
 & &&  (\pi_1, \pi_2, \pi_3) \geq 0, \,\, (x_1, x_2) \leq  \alpha. \nonumber
\end{alignat*}
Write $\pi_3 = 1 - \pi_1 - \pi_2$. The above program is equivalent to
\begin{alignat*}{2}
&\text{maximize} \quad &&- \frac{\pi_1 (\alpha - x_1)}{(1-  x_1)(1-\alpha)} - \frac{ \pi_2(\alpha - x_2)}{(1-  x_1)(1-\alpha)} +  \frac{\alpha}{1 - \alpha} \quad \\
& \text{subject to} \quad &&\pi_1 (\alpha - x_1) + \pi_2 (\alpha - x_2 )  = \alpha - c, \\
 & && \pi_1 (\alpha - x_1)(\alpha + x_1 - 2c) + \pi_2 (\alpha - x_2)(\alpha + x_2 - 2c)  = (\alpha - c)^2- \sigma^2, \\
 & &&  (\pi_1, \pi_2, \pi_3) \geq 0, \,\, (x_1, x_2, x_3) \leq  \alpha. \nonumber
\end{alignat*}
Put $\theta_1 = \pi_1 (\alpha - x_1)$ and $\theta_2 = \pi_1 (\alpha - x_1)$. The program above can be simplified to  
\begin{alignat}{2}
\label{eqn:opt_theta_simp}
&\text{maximize} \quad &&- \frac{1-\theta_1}{(1-  x_1)(1-\alpha)} - \frac{1-\theta_2}{(1-  x_1)(1-\alpha)} +  \frac{\alpha}{1 - \alpha} \quad \\
& \text{subject to} \quad &&\theta_1 + \theta_2  = \alpha - c, \nonumber\\
 & && \theta_1(\alpha + x_1 - 2c) + \theta_2(\alpha + x_2 - 2c)  = (\alpha - c)^2- \sigma^2, \nonumber\\
 & &&  (\theta_1, \theta_2, \theta_3) \geq 0, \,\, (x_1, x_2, x_3) \leq  \alpha. \nonumber
\end{alignat}
Now the constraints in \eqref{eqn:opt_theta_simp} are linear in $x$
and the objective is convex. Hence, the optimal solution either has
$x_1^{\star} = \alpha$ or $x_2^{\star} = \alpha$. In other words, the optimal solution either has $x_1^{\star} = x_3^{\star} = \alpha$ or $x_2^{\star} = x_3^{\star} = \alpha$. Thus the set $\cb{x_1^{\star}, x_2^{\star}, x_3^{\star}}$ has at most two distinct elements, with one of them being $\alpha$.

\paragraph{Optimal solution has two entries taking on the same value.}
Without loss of generality, assume $\pi_3 = 0$. Then problem \eqref{eqn:opt_3_terms} becomes
\begin{alignat*}{2}
&\text{maximize} \quad &&\pi_1 \frac{x_1}{1-  x_1} + \pi_2 \frac{x_2}{1 - x_2}  \\
& \text{subject to} \quad &&\pi_1 + \pi_2 = 1,\\
& && \pi_1 x_1 + \pi_2 x_2 = c, \\
 & && \pi_1 (x_1-c)^2 + \pi_2 (x_2-c)^2 = \sigma^2,\\
 & &&  (\pi_1, \pi_2) \geq 0, \,\, (x_1, x_2) \leq  \alpha. \nonumber
\end{alignat*}
Assume without loss of generality that $x_2 > x_1$. Make the change
of variable $x_1 = c - \sigma t$, $x_2 = c + \sigma/t$,
$\pi_1 = \frac{1}{t^2 + 1}$, and $\pi_2 = \frac{t^2}{t^2 + 1}$, where
$t \geq 0$. The constraint becomes
$0 < t \leq \frac{\sigma}{\alpha - c}$ and the objective can be written
as a function of $t$, 
\[V(t) = \frac{1}{t^2+1} \frac{c - \sigma t }{ 1 - c +\sigma t } +
  \frac{t^2}{t^2+1} \frac{c + \sigma t }{ 1 - c - \sigma t }. \] The
derivative
\[V'(t) = -\frac{\sigma^{3}\left(t^{2}+1\right)}{((1 - c)
    t-\sigma)^{2}(1-c+\sigma t)^{2}}, \] is an increasing function of
$t$. Therefore, $V$ is convex and attains its maximum at
$t = \frac{\sigma}{\alpha - c}$, i.e., $x_2 = \alpha$.
Thus the optimal solution either has $x_1^{\star} = x_3^{\star}$ and $x_2^{\star} = \alpha$, or $x_2^{\star} = x_3^{\star} = \alpha$. Hence the set $\cb{x_1^{\star}, x_2^{\star}, x_3^{\star}}$ has at most two distinct elements, with one of them being $\alpha$. 

\paragraph{Combining both cases.} 
We showed that the optimal solution of \eqref{eqn:opt_3_terms} satisfies that the set $\cb{x_1^{\star}, x_2^{\star}, x_3^{\star}}$ has at most two distinct elements, with one of them being $\alpha$. Without loss of generality, assume $\pi_3^{\star} = 0$, $x_2^{\star} = \alpha$. It remains to solve
\begin{alignat*}{1}
& x_2 = \alpha,\\
& \pi_1 + \pi_2 = 1,\\
& \pi_1 x_1 + \pi_2 x_2 = c, \\
 &  \pi_1 (x_1-c)^2 + \pi_2 (x_2-c)^2 = \sigma^2. 
\end{alignat*}
The solution is given by $x_1 = c - \frac{\sigma^2}{\alpha - c}$,
$x_2 = \alpha$,
$\pi_1 = \frac{(\alpha - c)^2}{(\alpha - c)^2 + \sigma^2}$ and
$\pi_2 = \frac{\sigma^2}{(\alpha - c)^2 + \sigma^2}$. The optimal
value of \eqref{eqn:opt_3_terms} is thus 
\begin{align*} \Vopt & = \frac{(\alpha - c)^2}{(\alpha - c)^2 + \sigma^2} \frac{c - \frac{\sigma^2}{\alpha - c}}{ 1 - c + \frac{\sigma^2}{\alpha - c}} + \frac{\sigma^2}{(\alpha - c)^2 + \sigma^2} \frac{\alpha}{1 - \alpha} \\ & = \frac{\alpha}{1 - \alpha} - \frac{(\alpha - c)^2}{(1-\alpha)(\sigma^2 + (1-c)(\alpha - c))}.
\end{align*}
\end{proof}

\subsubsection{Asymptotic Sharpness of \eqref{eqn:exch_bound2}}
We give an example where the difference between the FDR and bound
\eqref{eqn:exch_bound2} converges to 0 as $m \to \infty$. Observe that 
\[\p{q + \frac{\delta}{1 + \beta \delta} \sqb{\frac{c}{1-c} - \frac{c - c\delta}{1 - (c-c\delta)}q}} \leq  \p{q + c(1-q)}\]
if and only $\rho \leq \frac{c(1-q)}{q + c(1-q)}$. We thus assume
$\rho \leq \frac{c(1-q)}{q + c(1-q)}$, and give an example where the FDR
matches 
$\p{q + \frac{\delta}{1 + \beta \delta} \sqb{\frac{c}{1-c} - \frac{c -
      c\delta}{1 - (c-c\delta)}q}}$ asymptotically.

Consider the case where there are $n_1 = \lfloor \sqrt{m} \rfloor$
nonnulls appearing first in the sequence. Let all the nonnull $\pvals$
be 0. We adopt the same notations as above section and set
$\alpha = \frac{c}{c + q - cq}$,
$x_1 = c - \frac{\sigma^2}{\alpha - c}$, $x_2 = \alpha$,
$\pi_1 = \frac{(\alpha - c)^2}{(\alpha - c)^2 + \tilde{\rho}(1-c)c}$
and $\pi_2 = \frac{\sigma^2}{(\alpha - c)^2 +
  \tilde{\rho}(1-c)c}$. Note that by definition of $\sigma$, we have that $\pi_1 + \pi_2 = 1$. The condition
$\rho \leq \frac{c(1-q)}{q + c(1-q)}$ ensures that
$x_1, x_2 \in [0,1]$. Let 
$n_0 = m - n_1$ be the number of null $\pvals$ and set $m_1$ =
$\lfloor n_0 x_1 \rfloor + 1$ and $m_2$ =
$\lfloor n_0 x_2 \rfloor - 2$. We consider null $\pvals$ with the following distribution: 
\begin{enumerate}
\item With probability $\pi_1$, pick $m_1$ indices uniformly at random
  from $\cb{1, \dots, m}$, and sample the corresponding $\pvals$ as
  $\text{i.i.d.}\Unif[0,c]$; sample the other $\pvals$ independently
  from $\Unif[c,1]$.
\item With probability $1 - \pi_1$, pick $m_2$ indices uniformly
  at random from $\cb{1, \dots, m}$, and sample the corresponding $\pvals$ as $\text{i.i.d.}\Unif[0,c]$; sample the other $\pvals$ independently from
  $\Unif[c,1]$.
\end{enumerate}
The null $\pvals$ sampled this way are clearly exchangeable. We then
show that the null $\pvals$ are stochastically larger than
$ \operatorname{Unif}[0,1]$ and that they satisfy
$\Corr{\one\cb{p_i \leq c}, \one\cb{p_j \leq c}} \leq \rho$. For a
null $j$,
$\PP{p_j \leq c} = m_1/(m-n_1) \pi_1 + m_2/(m-n_1) \pi_2 \leq x_1\pi_1
+ x_2 \pi_2 = c$. Hence, by construction,
$p_j \geq \operatorname{Unif}[0,1]$. Regarding the covariance,
note that
$$(n_0 - 1)n_0\Cov{\one\cb{p_i \leq c}, \one\cb{p_j \leq c}} = (m_2 -
m_1)^2\pi_1\pi_2 - n_0 \Var{\cb{p_i \leq c}}.$$ Thus
$$\Corr{\one\cb{p_i \leq c}, \one\cb{p_j \leq c}} = (m_2 -
m_1)^2\pi_1\pi_2/\p{n_0(n_0 -1)\Var{\cb{p_i \leq c}}} - 1/(n_0 -
1).$$ The variance term obeys 
$$\Var{\cb{p_i \leq c}} = (1 - (m_1\pi_1 + m_2\pi_2)/n_0)((m_1\pi_1 +
m_2\pi_2)/n_0).$$ Thus one can verify that
$$\Corr{\one\cb{p_i \leq c}, \one\cb{p_j \leq c}} \leq (x_2 -
x_1)^2\pi_1 \pi_2/(c(1-c)) - 1/(n_0-1) \leq \rho.$$

Since there are many nonnulls appearing early in the sequence with
vanishing $\pvals$, the set
$\cb{j \in \Ho, j \leq \hat{k}: p_j \leq c}$ is not empty. Thus the
super-martingale from \ref{subsection:proof_exch1} 
becomes a martingale after $\hat{k}$, whence, 
\[
\EE{ M(\hat{k}) \big| \mathcal{F}_m}
=\frac{\#\left\{ j \in \Ho: p_{j} \leq c\right\}}{1+\#\left\{j \in \Ho: p_{j}>c\right\}}. 
\]
Note also that since $\hat{k} \geq n_1$, we have 
\[\frac{1+\#\left\{ j  \leq \hat{k}: p_{j}>c\right\}}{\#\left\{j \leq \hat{k}: p_{j} \leq c\right\} \vee 1} \geq \frac{1 - c}{c} q - \frac{1 + \frac{1-c}{c}q}{n_1}. \]
Combining these two facts gives 
\begin{align*}
\EE{\FDP | \mathcal{F}_m} &= \mathbb{E}\left[\frac{V}{R \vee 1} \Big| \mathcal{F}_m \right]
=\mathbb{E}\left[\frac{V}{R \vee 1} \cdot \one\cb{\hat{k}>0} \Big| \mathcal{F}_m \right]\\
&=\mathbb{E}\left[\frac{\#\left\{j \in \Ho, j  \leq \hat{k}: p_{j} \leq c\right\}}{1+\#\left\{j \in \Ho, j  \leq \hat{k}: p_{j}>c\right\}} \cdot\left(\frac{1+\#\left\{j \in \Ho, j  \leq \hat{k}: p_{j}>c\right\}}{\#\left\{j \leq \hat{k}: p_{j} \leq c\right\} \vee 1} \cdot \one\cb{\hat{k}>0}\right) \Bigg| \mathcal{F}_m \right] \\
&=\mathbb{E}\left[\frac{\#\left\{j \in \Ho, j  \leq \hat{k}: p_{j} \leq c\right\}}{1+\#\left\{j \in \Ho, j  \leq \hat{k}: p_{j}>c\right\}} \cdot\left(\frac{1+\#\left\{j  \leq \hat{k}: p_{j}>c\right\}}{\#\left\{j \leq \hat{k}: p_{j} \leq c\right\} \vee 1} \cdot \right) \Bigg| \mathcal{F}_m \right] \\
&\geq \p{ \mathbb{E}\left[\frac{\#\left\{j \in \Ho, j  \leq \hat{k}: p_{j} \leq c\right\}}{1+\#\left\{j \in \Ho, j  \leq \hat{k}: p_{j}>c\right\}} \Bigg| \mathcal{F}_m \right] \cdot \p{\frac{1 - c}{c} q - \frac{1 + \frac{1-c}{c}q}{n_1}}} \\
&= \frac{\#\left\{ j \in \Ho: p_{j} \leq c\right\}}{1+\#\left\{j \in \Ho: p_{j}>c\right\}}  \cdot \p{\frac{1 - c}{c} q - \frac{1 + \frac{1-c}{c}q}{n_1}}. 
\end{align*}
Taking expectation on both hand sides, we get
\begin{align*}
\FDR \geq \p{ \frac{\pi_1m_1}{1 + m - n_1 - m_1} + \frac{\pi_2 m_2}{1 + m - n_1 - m_2} }  \p{\frac{1 - c}{c} q - \frac{1 + \frac{1-c}{c}q}{n_1}}. 
\end{align*}
Taking limits on both sides yields 
\begin{align*}
\liminf_{m \to \infty} \FDR &\geq \lim_{m \to \infty}\p{ \frac{\pi_1m_1}{1 + m - n_1 - m_1} + \frac{\pi_2 m_2}{1 + m - n_1 - m_2} }  \p{\frac{1 - c}{c} q - \frac{1 + \frac{1-c}{c}q}{n_1}}\\
& =  \lim_{m \to \infty}\p{ \frac{\pi_1 x_1}{1 - x_1} + \frac{\pi_2 x_2}{1 -x_2} } \frac{1 - c}{c} q \\
& = \lim_{m \to \infty} q + \frac{\tilde{\delta}}{1 + \beta \tilde{\delta}} \sqb{\frac{c}{1-c} - \frac{c - c\tilde{\delta}}{1 - (c-c\tilde{\delta})}q}, 
\end{align*}
where the last equality follows from the results
in \ref{subsection:proof_exch2}. Finally, note that by construction,
\[ \lim_{m \to \infty} q + \frac{\tilde{\delta}}{1 + \beta \tilde{\delta}} \sqb{\frac{c}{1-c} - \frac{c - c\tilde{\delta}}{1 - (c-c\tilde{\delta})}q} = q + \frac{\delta}{1 + \beta \delta} \sqb{\frac{c}{1-c} - \frac{c - c\delta}{1 - (c-c\delta)}q}.\]
Therefore,
\[\liminf_{m \to \infty} \FDR = q + \frac{\delta}{1 + \beta \delta} \sqb{\frac{c}{1-c} - \frac{c - c\delta}{1 - (c-c\delta)}q}. \]

%%%%%%%%%%%%%%%%%%%%%%%%%%%%%%%%%%%
%%%%%%%%%%%%%%%%%%%%%%%%%%%%%%%%%%%
%%%%%%%%%%%%%%%%%%%%%%%%%%%%%%%%%%%
%%%%%%%%%%%%%%%%%%%%%%%%%%%%%%%%%%%

\subsection{Proof of Theorem \ref{theo:cond_prob}}
We start by proving Theorem \ref{theo:cond_prob}. 
The proof follows the argument in \citet{barber2020robust}. Define
\[R_{\delta} = \frac{ \#\cb{j\leq \hat{k}, j \in \mathcal{H}_0 : p_j \leq c, a_j \leq c+ \delta}}{1 +   \#\cb{j\leq \hat{k}, j \in \mathcal{H}_0: p_j > c} }
=  \sum_{j \in \mathcal{H}_0} \frac{ \one\cb{j \leq \hat{k}}  \one\cb{p_j \leq c}  \one\cb{a_j \leq c + \delta}}
{1 + \sum_{i \in \mathcal{H}_0} \one\cb{i \leq \hat{k}}\one\cb{p_i > c} }. 
\]
Define $\hat{k}_j$ to be $\hat{k}$ if $p_j$ were at most $c$: formally, 
\[\hat{k}_j =\max \left\{k \in \cb{1, \dots, n}: \frac{1+\#\left\{i \leq k, i \neq j: p_i>c\right\}}
{\one\cb{j \leq k} +  \#\left\{i \leq k, i \neq j: p_i \leq c\right\} } \leq \frac{1-c}{c} \cdot q\right\}. \]
Observe he relation 
\[ \frac{ \one\cb{j \leq \hat{k}} \one\cb{p_j \leq c} \one\cb{a_j \leq
      c + \delta}}{1 + \sum_{i \in \mathcal{H}_0} \one\cb{i \leq
      \hat{k}}\one\cb{p_i > c} } = \frac{ \one\cb{j \leq \hat{k}_j}
    \one\cb{p_j \leq c} \one\cb{a_j \leq c + \delta}}{1 + \sum_{i \in
      \mathcal{H}_0, i \neq j} \one\cb{i \leq \hat{k}_j}\one\cb{p_i >
      c} }, \] since $p_j \leq c$ implies $\hat{k} = \hat{k}_j$.

The quantity $$\frac{ \one\cb{j \leq \hat{k}_j} \one\cb{a_j \leq c + \delta}}{1 + \sum_{i \in \mathcal{H}_0, i \neq j} \one\cb{i \leq \hat{k}_j}\one\cb{p_i > c}  }$$ does not depend on $p_j$, and only depends on $p_{-j}$. Hence, the expectation of $R_{\delta}$ can be written as 
\begin{align*}
\EE{R_{\delta}} &= \sum_{j \in \mathcal{H}_0} \EE{ \frac{ \one\cb{j \leq \hat{k}}  \one\cb{p_j \leq c} \one\cb{a_j \leq c + \delta}}
{1 + \sum_{i \in \mathcal{H}_0} \one\cb{i \leq \hat{k}}\one\cb{p_i > c}  }  }\\
& = \sum_{j \in \mathcal{H}_0} \EE{ \frac{ \one\cb{j \leq \hat{k}_j}  \one\cb{p_j \leq c}\one\cb{a_j \leq c + \delta} }
{1 + \sum_{i \in \mathcal{H}_0, i \neq j} \one\cb{i \leq \hat{k}_j}\one\cb{p_i > c}  } }\\
& = \sum_{j \in \mathcal{H}_0} \EE{ a_j  \frac{ \one\cb{j \leq \hat{k}_j} \one\cb{a_j \leq c + \delta}  }
{1 + \sum_{i \in \mathcal{H}_0, i \neq j} \one\cb{i \leq \hat{k}_j}\one\cb{p_i > c}  } }\\
& \leq (c+\delta) \sum_{j \in \mathcal{H}_0} \EE{  \frac{ \one\cb{j \leq \hat{k}_j} \one\cb{a_j \leq c + \delta}  }
{1 + \sum_{i \in \mathcal{H}_0, i \neq j} \one\cb{i \leq \hat{k}_j}\one\cb{p_i > c}  } }.
\end{align*}

The term can be further bounded by
\begin{multline*}
 (c+\delta) \Bigg[ \sum_{j \in \mathcal{H}_0} \EE{  \frac{ \one\cb{j \leq \hat{k}_j}  \one\cb{p_j \leq c}  \one\cb{a_j \leq c + \delta}  }
{1 + \sum_{i \in \mathcal{H}_0, i \neq j} \one\cb{i \leq \hat{k}_j}  \one\cb{p_i> c}  } }
 + \sum_{j \in \mathcal{H}_0} \EE{  \frac{ \one\cb{j \leq \hat{k}_j}  \one\cb{p_j > c}   }
   {1 + \sum_{i \in \mathcal{H}_0, i \neq j} \one\cb{i \leq \hat{k}_j}  \one\cb{p_i > c}} } \Bigg]\\
= (c+\delta) \sqb{\EE{R_{\delta}} + \sum_{j \in \mathcal{H}_0} \EE{  \frac{ \one\cb{j \leq \hat{k}_j}  \one\cb{p_j > c}   }
{1 + \sum_{i \in \mathcal{H}_0, i \neq j} \one\cb{i \leq \hat{k}_j}  \one\cb{p_i > c}} } }. 
\end{multline*}
For the second term, unless the numerator is zero, 
\begin{align*}
 \sum_{j \in \mathcal{H}_0}  \frac{ \one\cb{j \leq \hat{k}_j}  \one\cb{p_j > c}   }
{1 + \sum_{i \in \mathcal{H}_0, i \neq j} \one\cb{i \leq \hat{k}_j}  \one\cb{p_i > c}} 
 &= \sum_{j \in \mathcal{H}_0}  \frac{ \one\cb{j \leq \hat{k}_j}  \one\cb{p_j > c}   }
{1 + \sum_{i \in \mathcal{H}_0, i \neq j} \one\cb{i \leq \hat{k}_i}  \one\cb{p_i > c}}\\
& =  \frac{ \sum_{j \in \mathcal{H}_0}  \one\cb{j \leq \hat{k}_j}  \one\cb{p_j > c}   }
{ \sum_{i \in \mathcal{H}_0} \one\cb{i \leq \hat{k}_i}  \one\cb{p_i > c}} = 1. \\
\end{align*}

Combining the results above, we have
$\EE{R_{\delta}} \leq (c + \delta) (\EE{R_{\delta}} + 1)$ and hence
\[\EE{R_{\delta}} \leq \frac{c+\delta}{1-c-\delta}. \]

Letting $\hat{\mathcal{S}}$ be the set of
rejections, we have 
\begin{align*}
\frac{\abs{j : j \in \hat{\mathcal{S}} \cap \mathcal{H}_0 \text{ and } a_j \leq c+\delta}} {\abs{\hat{\mathcal{S}}} \vee 1} 
& = \frac{ \#\cb{j\leq \hat{k}, j \in \mathcal{H}_0: p_j \leq c, a_j \leq c+ \delta}}{\#\left\{j \leq \hat{k}, j \in \mathcal{H}_0: p_{j} \leq c\right\} \vee 1}\\
& = 
\frac{ \#\cb{j\leq \hat{k}, j \in \mathcal{H}_0: p_j \leq c, a_j \leq c+ \delta}}{1 +   \#\cb{j\leq \hat{k}, j \in \mathcal{H}_0: p_j > c} }  \frac{1+\#\left\{j \leq \hat{k}, j \in \mathcal{H}_0: p_{j}>c\right\}}{\#\left\{j \leq \hat{k}: p_{j} \leq c\right\} \vee 1} \\
& \leq  R_{\delta} \frac{1-c}{c}  q. 
\end{align*}
Since the $\FDP$ is at most $1$, 
the FDR is at most 
\[
\FDR = \EE{\frac{\abs{j : j \in \hat{\mathcal{S}} \cap \mathcal{H}_0} }{ \abs{\hat{\mathcal{S}}} \vee 1}}
\leq \frac{1-c}{c}  q \EE{R_{\delta}} + \PP{ \max_{j \in \mathcal{H}_0} a_{j} > c + \delta}
\leq q \frac{c+\delta}{c}\frac{1-c}{1-c-\delta} + \epsilon. 
\]

\subsection{Proof of Theorem \ref{theo:no_assu}}
By definition, 
\[\FDR 
= \EE{\frac{\#\cb{\text{null } j \leq \hat{k}: p_{j} \leq c}}{\#\cb{j \leq \hat{k}: p_{j} \leq c}\vee 1}}
= \sum_{j \in \mathcal{H}_0} \EE{\frac{\one\cb{j \leq \hat{k}} \one\cb{p_j \leq c}}{\#\cb{j \leq \hat{k}: p_{j} \leq c}\vee 1}}.\]
For each null $j$, 
\begin{align*}
 \EE{\frac{\one\cb{j \leq \hat{k}} \one\cb{p_j \leq c}}{\#\cb{j \leq \hat{k}: p_{j} \leq c}\vee 1}}
 &\leq \PP{p_j \leq c} \max\cb{\frac{\one\cb{j \leq \hat{k}}}{\#\cb{j \leq \hat{k}: p_{j} \leq c}\vee 1}}  \\
  & \leq c \max\cb{\frac{\one\cb{j \leq \hat{k}}}{\#\cb{j \leq \hat{k}: p_{j} \leq c}\vee 1}} .
\end{align*}
We study the quantity  
\[\frac{\one\cb{j \leq \hat{k}}}{\#\cb{j \leq \hat{k}: p_{j} \leq c}\vee 1}. \]
By definition of $\hat{k}$, unless $\hat{k} = 0$, 
\[\frac{1+\#\cb{j \leq \hat{k}: p_{j}>c}}{\#\cb{j \leq \hat{k}: p_{j} \leq c}\vee 1} \leq \frac{1-c}{c} \, q.\]
Hence, as long as $\frac{1-c}{c} q < 1$, 
\[ \frac{1 + \hat{k} - \#\cb{j \leq \hat{k}: p_{j} \leq c}}{\#\cb{j \leq \hat{k}: p_{j} \leq c}} \leq \frac{1 - c}{c}q.  \]
This implies that 
\[\#\cb{j \leq \hat{k}: p_{j} \leq c} \geq \frac{1 + \hat{k}}{1 + \frac{1-c}{c}q}. \]
Therefore, 
\[\frac{\one\cb{j \leq \hat{k}}}{\#\cb{j \leq \hat{k}: p_{j} \leq c}\vee 1} \leq \frac{\one\cb{j \leq \hat{k}}}{1 + \hat{k}}\p{1 + \frac{1-c}{c}q} \leq \frac{1 + \frac{1-c}{c}q}{1 + j}. \]
Note that the above quantity holds for $\hat{k} = 0$ as well. This shows that 
\begin{align*}
\EE{\frac{\one\cb{j \leq \hat{k}} \one\cb{p_j \leq c}}{\#\cb{j \leq \hat{k}: p_{j} \leq c}\vee 1}}
 &\leq 
c \max\cb{\frac{\one\cb{j \leq \hat{k}}}{\#\cb{j \leq \hat{k}: p_{j} \leq c}\vee 1}}\\
& \leq \p{c + (1-c) q} \frac{1}{1 + j}. 
\end{align*}
Taking the summation over null $j$'s gives the desired result. 

\end{appendix}

\end{document}